\documentclass[11pt]{article}
\usepackage[margin=.9in]{geometry}
\usepackage{graphicx}
\usepackage{textcomp}
\usepackage{xcolor}
\usepackage{makecell}

\usepackage{fancyvrb}
\usepackage{algorithm}
\usepackage[noend]{algpseudocode}
\usepackage{hyperref}
\usepackage{booktabs}
\usepackage{subcaption}
\usepackage{cite}
\usepackage{authblk}
\usepackage{amsmath,amsthm,amssymb,amsfonts}
\usepackage{listings}
\usepackage[normalem]{ulem}
\usepackage{appendix}
\usepackage{hyperref}
\usepackage{pdfpages}
\usepackage{booktabs}
\usepackage{caption}
\usepackage{subcaption}
\usepackage{multirow}
\usepackage{mathtools} 
\def\BibTeX{{\rm B\kern-.05em{\sc i\kern-.025em b}\kern-.08em
    T\kern-.1667em\lower.7ex\hbox{E}\kern-.125emX}}

\newtheorem{lemma}{Lemma}
\newtheorem{theorem}{Theorem}

\newcommand{\etal}{\emph{et. al}}

\newcommand{\rv}{\ensuremath{X}}

\newcommand{\population}{\ensuremath{\mathcal{P}}}
\newcommand{\aux}[1]{\ensuremath{aux(#1)}}
\newcommand{\dimension}{\ensuremath{d}}
\newcommand{\popexp}{\ensuremath{\boldsymbol{\mu}}}
\newcommand{\midpt}{\ensuremath{mid}}
\newcommand{\side}{\ensuremath{side}}

\newcommand{\popbn}{\ensuremath{BN(\population)}}
\newcommand{\outputnodes}{\ensuremath{O}}

\newcommand{\dataset}{\ensuremath{D}}
\newcommand{\numrecords}{\ensuremath{n}}
\newcommand{\datamarg}{\ensuremath{\overline{\boldsymbol{x}}}}
\newcommand{\datarecord}{\ensuremath{\mathbf{x}}}

\newcommand{\target}{\ensuremath{\mathbf{y}}}
\newcommand{\tlabel}{\ensuremath{\ell}}
\newcommand{\tin}{\ensuremath{\mathsf{IN}}}
\newcommand{\tout}{\ensuremath{\mathsf{OUT}}}
\newcommand{\tlabelparam}{\ensuremath{\tlabel_{\tin}}}

\newcommand{\pr}[1]{\ensuremath{Pr({#1})}}

\newcommand{\unifofset}[1]{\ensuremath{\mathcal{U}(#1)}}
\newcommand{\bindist}[2]{\ensuremath{\mathcal{B}(#1,\,#2)}}

\newcommand{\classicalthresh}{\ensuremath{T_{C}}}
\newcommand{\ipthresh}{\ensuremath{T_{IP}}}
\newcommand{\bayesthresh}{\ensuremath{T_{B}}}
\newcommand{\bayesratio}{\ensuremath{R}}
\newcommand{\freqratio}{\ensuremath{\Lambda}}

\newcommand{\flip}[1]{\ensuremath{flip(#1)}}
\newcommand{\networkprog}[1]{\ensuremath{network_{#1}}}
\newcommand{\observe}[1]{\ensuremath{observe(#1)}}
\usepackage{authblk}
\begin{document}

\title{A Bayesian Approach to Membership Inference for Statistical Release}

\author[1, 2]{Lisa Oakley}
\author[1]{Sam Stites}
\author[1]{Cameron Moy}
\author[1]{Steven Holtzen}
\author[1]{Alina Oprea}
\author[3]{Marco Gaboardi}

\affil[1]{\footnotesize Northeastern University, USA}
\affil[2]{\footnotesize Proof Trading, Inc, USA}
\affil[3]{\footnotesize Boston University, USA}

\date{\vspace{-5ex}}

\maketitle

\begin{abstract}
The membership inference problem for publicly released statistics from a private dataset is well-studied. When developing and formally analyzing attack strategies, however, the focus has been on attacks that model the population using only its marginals. In practice, these attacks can perform well on various populations, however most formal analysis is for populations that follow a product distribution. These strategies may fail to leverage useful information about the population that is important for understanding a realistic privacy threat.

In this work, we explore the impact of providing an attacker with additional information about the attribute dependency structure of the population, motivated by examples where multiple parties may have access to similarly structured data, for example the US Census and the IRS. To model this scenario, we reframe the membership inference problem with respect to a population represented as a Bayesian network (BN). We develop a framework based on Bayesian decision-making which can incorporate prior information about the population to launch more effective, specialized attacks. 

To evaluate our framework, we introduce a specific attack instantiation which computes the Bayesian posterior using a probabilistic program, and prove its equivalence to an optimal variant of the likelihood ratio test attack for two populations with strong attribute dependency. We implement our program in the Roulette probabilistic programming language and show experimentally that it outperforms the likelihood ratio test and inner product attacks on five commonly used BNs, where the population dependency structure is too complex for the existing attacks to be manually adapted.
\end{abstract}

\section{Introduction}
\label{sec:intro}
Membership inference attacks for statistical release seek to determine whether a given record is in a private dataset based on publicly available statistics. Popular attacks for the membership inference problem for statistical release tend to rely on assumptions about the population from which the private data were sampled \cite{SankararamanGenomic2009,homer_resolving_2008,inner_product_2015,survey_2017}. In particular, many well-known attacks such as the inner product and likelihood ratio test attacks use only the population marginals to compute their decisions. When the population follows a product distribution, there are theoretical guarantees on the effectiveness of these attacks, and using only the marginals as a proxy for the population allows for simple and efficient attacks. Even when the strong independence assumptions do not hold, these procedures can perform well on data with unknown or unavailable attribute dependency structure, however there is lacking theoretical investigation into why this is the case. In some cases, attacks that assume a product distribution can be manually adjusted to consider known structure of specific populations. For example, intentionally ignoring (clipping) equivalent attributes can correct for disproportionate signaling. While effective, there are limited cases with a known modification, and the process is hard to generalize. When analyzing the privacy of a system, it is important to understand the full capabilities of an attacker with some privileged information, rather than focusing only on the most convenient and efficient attacks.

In reality, it is rare to have a population with no dependency between attributes. For example, an individual's income is dependent on their age and occupation, and their occupation is dependent on their age and location. Not only are there obvious dependencies in the attributes of their data, there exists a potentially complex, multi-layered structural relationship between these attributes. 

It is also common that a curious party may have access to information about this structural relationship between attributes of the population. For example, when the US census releases public statistics about its respondents, it aims to protect the privacy of the individuals from all other parties, including other governmental departments. Many other departments might have data drawn from the same population, for example the IRS also has a sample of demographic information like age, state, number of children, and income over residents of the US. In this case, the census would want to understand the extent to which the IRS could use such a sample to inform their attack. In another alarming case, in 2025 Columbia University experienced a data breach affecting the personal information including demographic information, financial aid information, academic history, and Social Security numbers of applicants dating back over a decade and totaling over 870,000 affected individuals \cite{Columbia2025}. An attacker in this case could use this database of personal information to learn the attribute dependence structure of Ivy League college applicants in the US each year. The attacker could use this learned structure to inform a privacy attack on released admissions statistics (such as commonly released aggregate GPA, demographic, and geographical statistics) from other Ivy League universities who were not subject to this particular data breach. This type of advanced attack is not captured by analysis that considers an attacker who is only using the dataset marginals as attack inputs. 

One way to bring structure into statistical domains is by defining a Bayesian model of the population and using Bayesian decision-making \cite{murphy_bayesian_decision}. In this paper, we formalize this process for the membership inference problem to develop an attack evaluation framework that can incorporate the population's attribute dependency without manual analysis. We choose to model the population as a \emph{Bayesian network} (BN), which is a Bayesian model described as a directed, acyclic graph over random variables. BNs can flexibly describe many interesting distributions, and there are many methods for learning a BN from a sample dataset \cite{adnan_param_learning}.

Our primary contribution is an attack evaluation framework based on Bayesian decision-making which can be updated with a model of the population as a BN to create a specialized attack on the private data, without manual adjustments. Informally, our attack is a statistical test which, for threshold value $\bayesthresh$, the attacker guesses a target point $\target$ is in a given dataset $\dataset$ if

\begin{equation}
    \frac{\pr{\text{target is in dataset} \mid \text{observed quantities}}}{1-\pr{\text{target is in dataset} \mid \text{observed quantities}}} > \bayesthresh.
\end{equation}
In later sections, we define a probabilistic program for computing the posterior where the observed quantities include the description of the population as a BN and sampling procedure, and prove that our Bayesian attack finds an optimal clipping of the likelihood ratio test attack for two populations with high attribute dependency. Additionally, we show experimentally that an implementation of our Bayesian attack in the state-of-the-art Roulette \cite{moy2025roulette} probabilistic programming language is able to outperform the traditional attacks on private data generated by a set of well-known benchmarking BNs with complex attribute dependency structures that have no obvious clipping solution for the traditional attacks.

\section{A Bayesian View of the Membership Inference Problem}
\label{sec:background}
The membership inference problem for statistical release can be described simply as: assuming an attacker has access to a statistic over a private dataset and some information about the population from which the private data was drawn, the attacker must determine whether a target data point was drawn from the private data or from the population. Typical framing for this problem de-emphasizes potential population attribute dependency, often providing the attacker only the population marginals. We propose a reframing that provides the attacker more information about the population. In particular, we describe the population as a \emph{Bayesian network}, and provide some auxiliary information to the attacker. In our formulation, we focus on the membership inference problem for distributions with binary (or one-hot-encoded discrete) attributes. This formulation can be immediately extended to populations with continuous attributes by discretizing (bucketing) the feature space, or by using a Bayesian network over continuous features.

\subsection{The Membership Inference Problem for Statistical Release}
The membership inference problem for statistical release can be described as:

\begin{enumerate}
    \item Let population $\population:\Omega\rightarrow [0,1]$ over $\Omega=\{0,1\}^\dimension$ be a discrete data distribution, and let target label $\tlabel\in\{\tin,\tout\}$.
    \item Private dataset $\dataset=\{\mathbf{x}^{(1)},\dots,\mathbf{x}^{(\numrecords)}\}$ consists of $\numrecords$ samples (records) drawn i.i.d. from $\population$.
    \item The sample means $\datamarg=\frac{1}{\numrecords}\sum_{i=1}^{\numrecords} \mathbf{x}^{(i)}$ of $\dataset$ are publicly released.
    \item Attacker is given $\numrecords$, and some information $\aux{\population}$ about the population.
    \item Attacker is given target point $\target$ such that 
    \begin{equation}
        \target=\begin{cases}
            \target\sim\unifofset{\dataset},& \text{if } \tlabel = \tin\\
            \target\sim\population,  & \text{if } \tlabel = \tout
        \end{cases}
    \end{equation}
    with $\unifofset{\dataset}$, the uniform distribution over the records of $\dataset$.
    \item Attacker must decide whether $\tlabel$ is \tin{} or \tout{}.
\end{enumerate}

This formulation does not assume the population has attribute independence. Threat models under this definition are instantiations of $\aux{\population}$. We discuss some example threat models in Sec. \ref{apdx:threatmodels}. We use boldface variables to indicate a vector, $\mathbf{a}_j$ is the $j$th element of vector $\mathbf{a}$, and $\mathbf{x}^{(i)}_j$ represents the $j$th element of the $i$th record of database $\dataset$.

\subsection{Population as a Bayesian Network}

Our formulation of the membership inference problem describes the population as a discrete distribution. To allow for rich graphical descriptions of the structure of the distribution, we use Bayesian networks to model this discrete distribution. There are other graphical structures, for example this causal graph of the German credit database \cite{machado2025sequential}, that could also be potential population models, but we focus on Bayesian networks here for their generality, availability of examples, and standardized format.

A \emph{Bayesian Network (BN)} is a directed, acyclic graph (DAG) where nodes are random variables and edges are dependencies. It is used to succinctly capture the structure of dependency of the random variables in a distribution, with each node containing a joint probability table over itself and its parents. The power of modeling a distribution as a BN comes from the ability to define the whole dependency structure of the ancestors of a random variable with respect to only its parents. We denote the BN which describes a population \population{} as \popbn{}.

For example, we consider a simple BN from Korb and Nicholson \cite{bn_cancer}, as described in Fig. \ref{fig:bn_cancer}, which models probabilities of cancer symptoms dependent on environmental factors. This BN consists of five binary random variables (nodes): Pollution (P), Smoking (S), Cancer (C), X-ray (X) and Dyspnoea (D). In this example, the probability of having cancer (C) is dependent on pollution (P) and smoking (S), and the probability that the patient has a positive x-ray (X) or shortness of breath (D) are dependent on their shared parent (C).

\begin{figure*}
    \centering
    \begin{subfigure}{.4\linewidth}
        \centering
        \includegraphics[width=.9\linewidth]{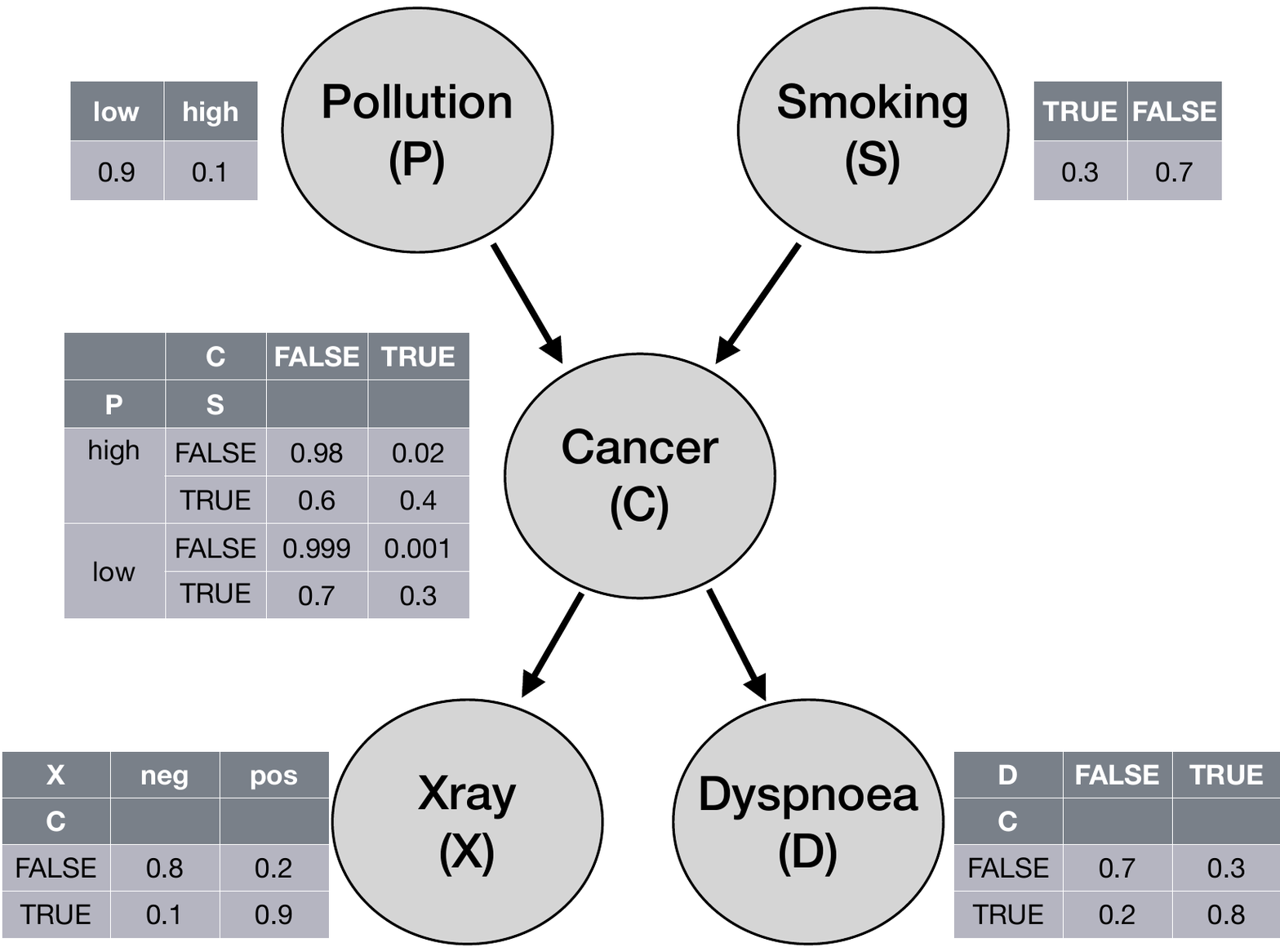}
        \caption{Bayesian network example from \cite{bn_cancer}. Conditional probability values are modified for the clarity of this example.}
        \label{fig:bn_cancer}
    \end{subfigure}
    \qquad
    \begin{subfigure}{.2\linewidth}
        \centering
        \includegraphics[width=1\linewidth]{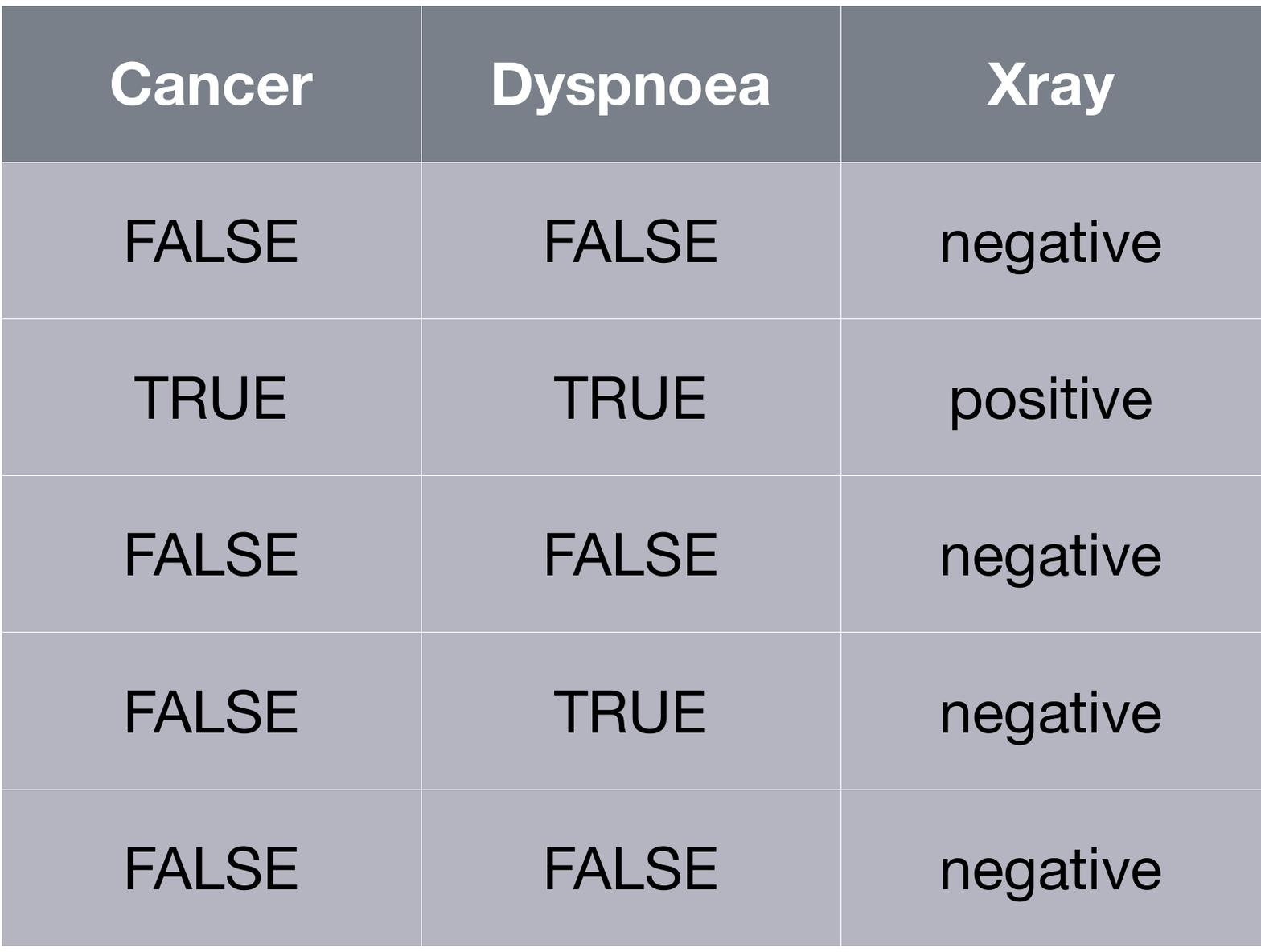}
        \caption{Sample from Cancer Bayesian network example with output nodes $\outputnodes=\{C,X,D\}$ and $\numrecords=5$ samples.}
        \label{fig:cancer_sample}
    \end{subfigure}
    \caption{Simple example of a BN (left), and a sample drawn from this BN (right).}
\end{figure*}

\begin{figure}
    \centering
    \includegraphics[width=.4\linewidth, trim=0 6cm 0 6cm, clip]{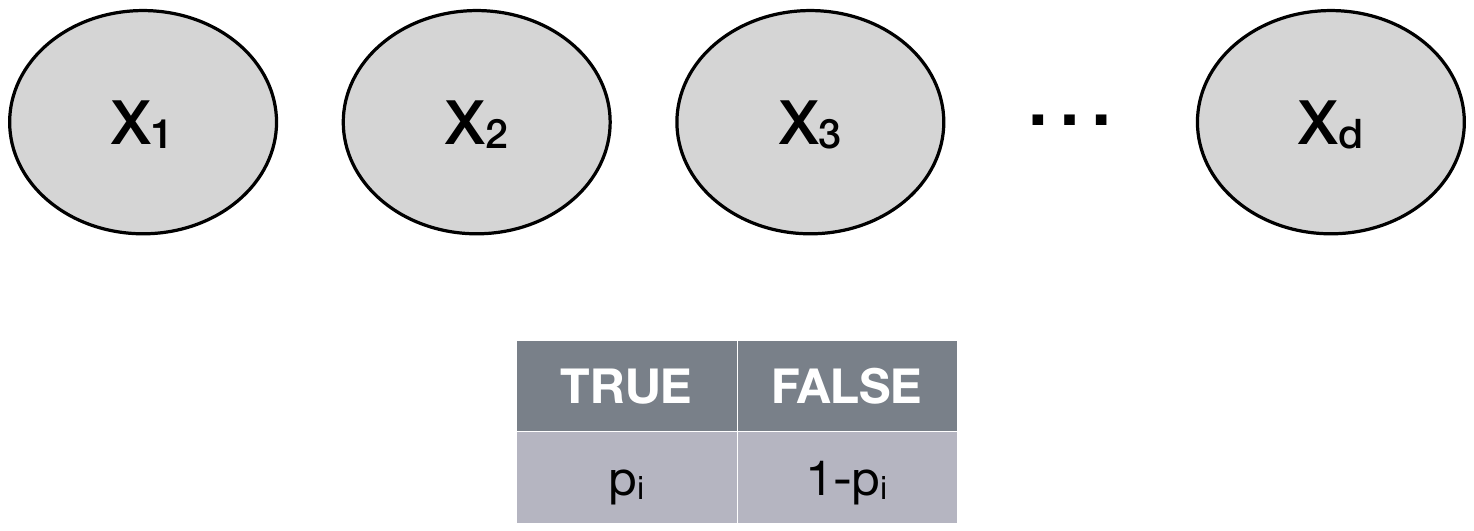}
    \caption{Binary product distribution as a Bayesian network where $\mathbf{p}_i=\popexp_i$ represents the probability that $X_i$ is true.}
    \label{fig:bn_product}
\end{figure}

\subsubsection{Describing a Product Distribution as a Bayesian Network}\label{apdx:prodbn}
The typical membership inference problem (and existing attacks) assumes the population follows a product distribution. This is a special case of our formulation, with the Bayesian network described in Fig. \ref{fig:bn_product}. Therefore, our formulation is a generalization of the typical membership inference problem formulation.

\subsubsection{Sampling a Private Dataset from a Bayesian Network}
The membership inference problem is defined for private datasets of $n$ records drawn i.i.d. from a population with $d$ attributes. 
When describing the population as a Bayesian network, we define a set of \emph{output nodes}, \outputnodes, whose values correspond to these attributes. The output nodes are chosen based on the semantics of the Bayesian network as part of the definition of the population, and are consistent across attacks. In Fig. \ref{fig:bn_cancer}, we can reasonably define the output nodes as $\{X,D\}$, meaning that the private dataset only contains information about the symptoms the patient is experiencing ($\dimension=2$), as $\{C,X,D\}$ ($\dimension=3$) including the patient status, or as $\{P, S, C, X, D\}$ ($\dimension=5$) the entire set of random variables. Fig. \ref{fig:cancer_sample} is an example of a sampled private dataset with $\numrecords=5$ records for $\outputnodes=\{C,X,D\}$.

\subsection{Attacker Information (Threat Models)}\label{apdx:threatmodels}
Now that we have defined the population \population{} in terms of a Bayesian network, we can discuss how to instantiate the auxiliary information the attacker receives ($\aux{\population}$) to define different threat models. In particular, we explain the strong, weak, and weakest attacker threat models. In this paper, we will focus on the strong attacker scenario for our theoretical analysis. In Sec. \ref{sec:bayesonbn}, we provide experimental results for all three threat models.

\subsubsection{Full Information (Strong Attacker)}
The strongest attacker threat model in the case of a population modeled as a Bayesian network provides the full structure and joint probability tables of nodes in the Bayesian network. Formally, $\aux{\population}=\popbn$. This provides the attacker with the maximum amount of information about the Bayesian network. 

\subsubsection{Public Auxiliary Dataset and Structure (Weak Attacker)}
In the case of the weak attacker threat model, the attacker has access to only the \emph{structure} of the Bayesian network (i.e., nodes and edges), but not to the joint probability tables of each node. Additionally, this attacker has access to a public auxiliary dataset drawn from the same population as the private dataset. The attacker can use state-of-the-art methods to learn the joint probability tables of the Bayesian network to get an approximation of the true population \cite{adnan_param_learning}. The attacker can then use this approximate Bayesian network to implement the same attack as the strong attacker.

\subsubsection{Auxiliary Dataset and No Structure (Weakest Attacker)}
The weakest attacker gets only the public auxiliary dataset drawn from the same distribution of the private dataset. This attacker will have to learn both the structure of the Bayesian network, and the joint probability tables. Learning the structure of the Bayesian network is a harder problem than just learning the joint probabilities, but there are methods that exist to do this type of learning as well \cite{kitson2023structurelearning}.

\subsection{Measuring Attack Utility}\label{sec:utility}

In our experimental evaluation, we evaluate the utility of each attack with respect to its \textit{Receiver Operating Characteristic (ROC) curve}, and specifically the \textit{area under this curve (AUC)}. The ROC curve is a plot of the true positive rate (TPR), when attacker guesses the target is in the dataset in the case where $\tlabel=\tin$, over the false positive rate (FPR), when attacker guesses the target is in the dataset in the case where $\tlabel=\tout$. 

Intuitively, high AUC means that a target point drawn from the private database will have a higher probability of being labelled $\tin$ than a target point drawn directly from the population. An AUC of 1 means the attacker correctly classifies all target points. Therefore, the mean AUC over various private databases drawn from the same population is a good metric for determining attack utility for specific populations or Bayesian networks.

\section{Population Marginal Attacks under Attribute Dependence}
\label{sec:classical}
Before introducing our Bayesian approach, we highlight the limitations of existing attacks for membership inference on statistical release in our context. We categorize the likelihood ratio test and inner product attacks under the umbrella of ``population marginal attacks,'' referring to their reliance on the population marginals (or a proxy thereof) as their population model. We demonstrate how these attacks can fail under strong population attribute dependence.

\subsection{Likelihood Ratio Test Attack with a Product Assumption}\label{sec:classical_attack}
The likelihood ratio test attack \cite{SankararamanGenomic2009,homer_resolving_2008} uses hypothesis testing with $H_0$ (null): \target{} was sampled from the population (i.e. \target{} is \tout{}) and $H_1$: \target{} was chosen from $\dataset$ (i.e. \target{} is \tin{}) to decide whether a sample is in or out of the dataset. In particular, it finds the likelihood ratio
\begin{equation}
    \freqratio = \frac{\Pr_{\target'\sim \unifofset{\dataset}}(\target'=\target)}{\Pr_{\target'\sim \population}(\target'=\target)},
\end{equation}
and chooses threshold $\classicalthresh$ to makes the decision:
    $\text{\emph{attacker guesses} }\tlabel = \tin \iff \Lambda > \classicalthresh.$

By the Neyman-Pearson Lemma, this is the optimal statistical test for $H_0$ when the likelihoods are computed exactly. However, in  implementation, an approximate ratio is often computed using only the population marginals (assuming attribute independence):
\begin{equation}
    \freqratio = 
    \frac{\prod_{j=1}^{\dimension} \{\text{if } \target_j = 1 \text{ then } \datamarg_j \text{ else } 1-\datamarg_j\}}
    {\prod_{j=1}^{\dimension} \{\text{if } \target_j = 1 \text{ then } \popexp_j \text{ else } 1-\popexp_j\}}. \label{eqn:freq_test}
\end{equation}

\subsection{Inner Product Attack}
\label{sec:inner_product_attack}
Another popular attack, the \emph{inner product attack} \cite{inner_product_2015}, computes how inclusion of the target point changes the released statistic to make the decision:
$\text{\emph{attacker guesses} }\tlabel = \tin \iff \langle \datamarg - \popexp, \target \rangle > \ipthresh$ 
where $\ipthresh$ is some threshold, $\datamarg$ are the sample/dataset means, $\popexp$ are the population marginals, $\target{}$ is the target point and $\langle \mathbf{a},\mathbf{b} \rangle$ is the inner product of vectors $\mathbf{a}$ and $\mathbf{b}$. This test is succinct, but also limits population evidence to solely the population marginals. 

\begin{figure}
    \centering
    \begin{subfigure}{.3\linewidth}
        \centering
        \includegraphics[width=\linewidth]{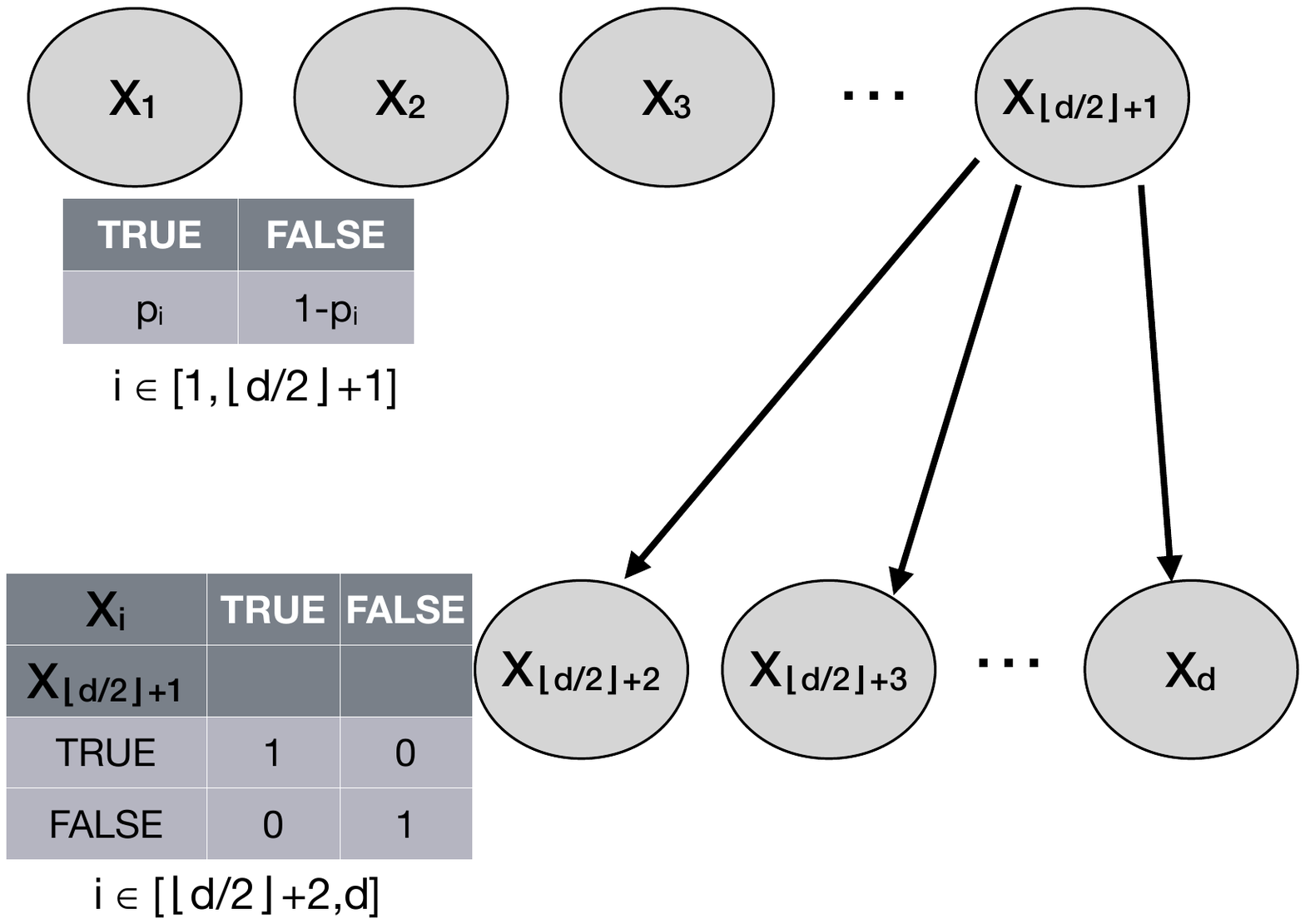}
        \caption{BN for half repeated data. Here the output set $\outputnodes=\{\rv_1,\dots,\rv_d\}$.}
    \label{fig:bn_half}
    \end{subfigure}
    \qquad
    \begin{subfigure}{.5\linewidth}
        \centering
        \includegraphics[width=\linewidth]{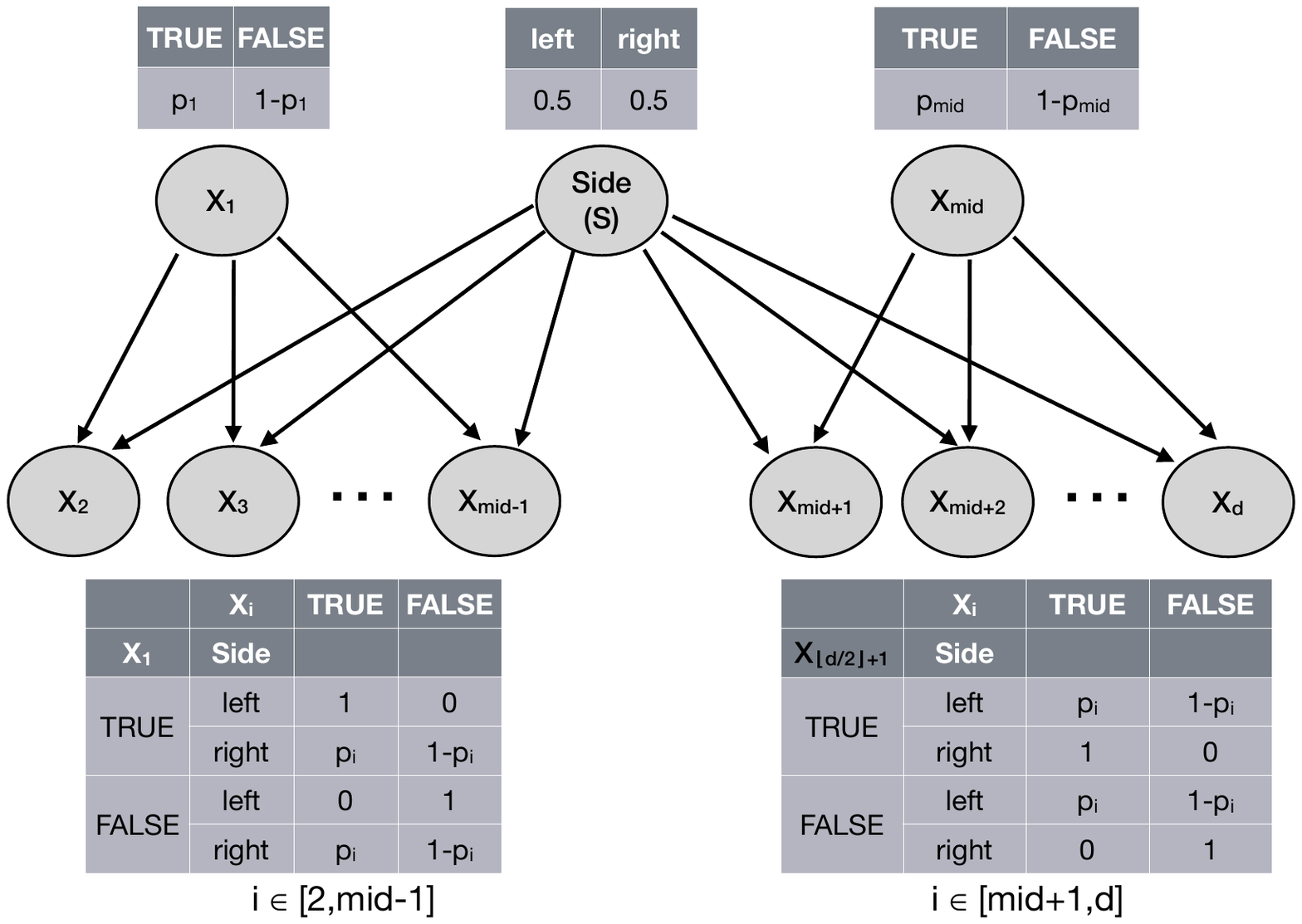}
        \caption{BN for left/right repeated data for even $\numrecords$. We set $\midpt=\lfloor\frac{\dimension}{2}\rfloor+1$ and $\outputnodes=\{\rv_1,\dots,\rv_d\}$.}
        \label{fig:bn_side}
    \end{subfigure}
    \caption{Toy Bayesian networks with strong attribute dependencies.}    
\end{figure}

\subsection{Performance under Strong Attribute Dependence}
Though the likelihood ratio test attack is optimal on product-distributed populations, and the likelihood ratio test and inner product attacks both boast strong performance with respect to certain metrics \cite{SankararamanGenomic2009,homer_resolving_2008,inner_product_2015,survey_2017}, there are simple attribute dependency structures on which they struggle. In two such cases, the dependency structure of the population is simple enough that there is a clipping which neutralizes the effects of this dependency. However, in many cases the attribute dependency structure is too complicated for simple population-specific fixes. We describe these two cases and their associated clippings in the following sections.

\subsubsection{Half Repeated Population}\label{sec:freq_attack_half}
In the \emph{half repeated population} (described in Fig. \ref{fig:bn_half}), the first half of the attributes are independent, but the other half are repeated. Intuitively, population marginal attacks amplify the signal of the $(\lfloor\dimension/2\rfloor+1)$th attribute, overwhelming the meaningful information in the first half of the attributes. The population marginal attackers can correct for this by observing the simple dependency and clipping the repeated attributes. Formally, the likelihood ratio test attacker computes
\begin{equation}
    \freqratio_{half}=\frac
        {\prod_{j=1}^{\lfloor\dimension/2\rfloor+1}\left\{\text{if } \target_j=1 \text{ then }\datamarg_j \text{ else }1-\datamarg_j\right\}}
        {\prod_{j=1}^{\lfloor\dimension/2\rfloor+1}\left\{\text{if } \target_j=1 \text{ then }\popexp_j \text{ else }1-\popexp_j\right\}},
        \label{eqn:freq_test_half}
\end{equation}
and the inner product attacker computes the inner product for the first $(\lfloor\dimension/2\rfloor+1)$ attributes for the half repeated population.

\subsubsection{Left/Right (l/r) Repeated Population}\label{sec:freq_attack_side}
In the slightly more complicated \emph{left/right (l/r) repeated population} (described in Fig. \ref{fig:bn_side}), half of the attributes are again repeated, but we introduce an additional coin flip to determine which side of the attributes are independent. The attacker can still correct for signal amplification here by clipping repeated attributes, however they must first determine the correlated side from the released statistics. Formally the likelihood ratio test attacker computes for $\side=right$
\begin{equation}
    \freqratio_{\midpt,right}=\frac
        {\prod_{j=1}^{\midpt}\left\{\text{if } \target_j=1 \text{ then }\datamarg_j \text{ else }1-\datamarg_j\right\}}
        {\prod_{j=1}^{\midpt}\left\{\text{if } \target_j=1 \text{ then }\popexp_j \text{ else }1-\popexp_j\right\}}
        \label{eqn:freq_test_side_right}
\end{equation}
and for $\side=left$
\begin{equation}
    \freqratio_{\midpt,left}=\frac
        {\prod_{j=\midpt-1}^{\dimension}\left\{\text{if } \target_j=1 \text{ then }\datamarg_j \text{ else }1-\datamarg_j\right\}}
        {\prod_{j=\midpt-1}^{\dimension}\left\{\text{if } \target_j=1 \text{ then }\popexp_j \text{ else }1-\popexp_j\right\}}.
        \label{eqn:freq_test_side_left}
\end{equation}
for the l/r repeated population.

Choosing the repeated side incorrectly causes this attack to have worse performance than ignoring the repetition altogether. In Fig. \ref{fig:corr_fails}, we demonstrate the impact of attribute correlation on the original and clipped attacks.
\begin{figure}
    \centering
    \includegraphics[width=.5\linewidth]{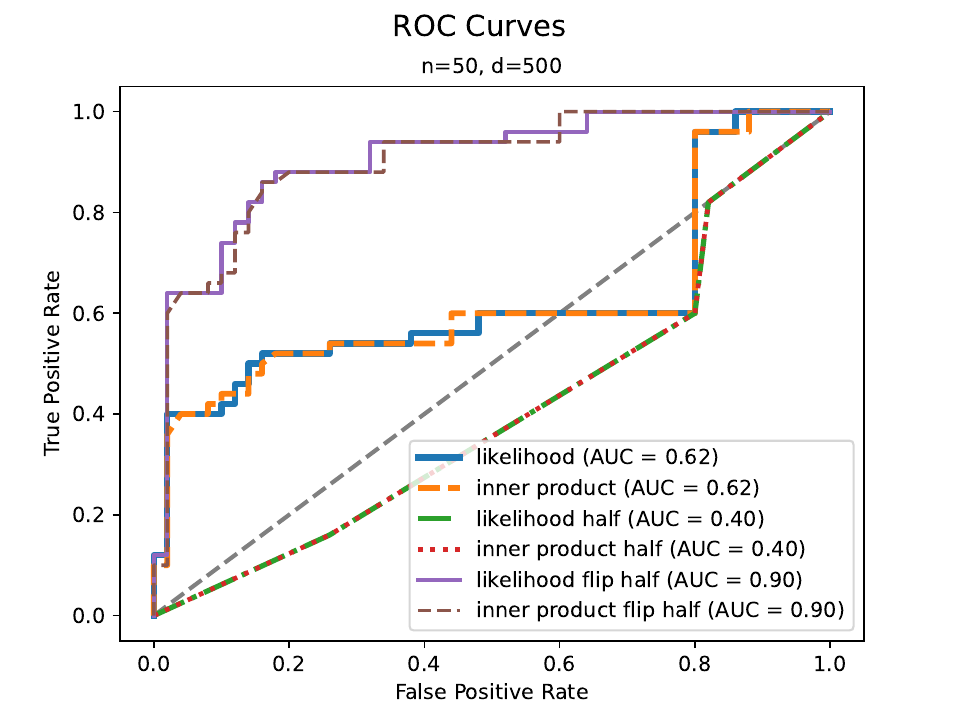}
    \caption{An example of ROC curves for population marginal attacks on left/right repeated populations (illustrated in Fig \ref{fig:bn_side}). The basic attacks (likelihood, inner product) perform poorly on data with this correlation structure. When the attacker knows which side is correlated (likelihood half, inner product half), they are able to improve the attack performance significantly by clipping repeated attributes. When the attacker guesses the correlated side incorrectly (likelihood flip half, inner product flip half), they perform worse than a coin flip because they clip the useful information and leave only the repeated attributes. }
    \label{fig:corr_fails}
\end{figure}
\subsubsection{More Complicated Dependency Structure}
In populations with only independent or repeated attributes and simple structure, clipping is effective for recovering performance of population marginal attacks. As the structure gets even slightly more complicated, however, fixes become less obvious. Consider a population where $k$ random attributes are chosen to take the same value. How does the attacker decide which attributes to ignore? For the Cancer example from Sec. \ref{sec:background}, any clipping would remove useful signal. In the next section, we describe our Bayesian decision-making attack, which is defined generally for any population expressed as a Bayesian network, and demonstrate its strength at incorporating attribute dependency structure without manual adjustments.

\section{Our Bayesian Attack Framework}
\label{sec:bayes}
Our proposed attack framework is based on known Bayesian decision-making procedures \cite{murphy_bayesian_decision}. In this case, we are trying to estimate the probability that $\tlabel=\tin$, which we will represent using the parameter $\tlabelparam$. We compute the posterior using auxiliary information given to the attacker, including the Bayesian network of the population, the dataset marginals, and the target point. More specifically, though still informally, we compute a Bayesian decision ratio

\begin{equation}
    \bayesratio = \frac
    {\pr{\tlabelparam \mid \popbn,\,\datamarg,\,\target}}
    {1-\pr{\tlabelparam \mid \popbn,\,\datamarg,\,\target}}.
    \label{eqn:bayestest}
\end{equation}
Our attack procedure for some thresholding value, $\bayesthresh$, is therefore 
\begin{equation}
    \text{attacker guesses }\tlabel = \tin \iff R > \bayesthresh.
\end{equation} 

In the case of the weak and weakest attack models as described in Sec. \ref{apdx:threatmodels}, the attacker must first approximate this Bayesian network from a proxy dataset drawn from the population. 

\subsection{Defining the Posterior as a Program}
The quantity $\pr{\tlabelparam \mid \popbn,\,\datamarg,\,\target}$ represents the probability that $\tlabel=\tin$, conditioned on the marginals, $\datamarg$, being equivalent to the marginals of a dataset where one arbitrary record is the target point \target{}, and the other $\numrecords-1$ records are sampled i.i.d. from the population. The denominator of eqn. (\ref{eqn:bayestest}) is the probability that $\tlabel=\tin$, given the marginals $\datamarg$ are equal to those of a dataset where all $\numrecords$ are drawn i.i.d. from the population (without assuming the target is in the data).

We formally define this posterior as a probabilistic program. We define $\networkprog{\popbn}$, a program that models the population Bayesian network $\popbn$.  In Appendix \ref{apdx:network_ppl}, we provide an example of the encoded Cancer Bayesian network from Fig. \ref{fig:bn_cancer}, written in the Roulette probabilistic programming language (PPL). We also assume we have two primitives: $\flip{p}$, which represents a Bernoulli distribution with parameter $p$, and $\observe{\cdot}$, a standard PPL construct that calculates a conditional probability. Concretely, for the program $\theta\gets\flip{1/2};$ $\observe{\phi_1}\dots\observe{\phi_n};$ $\text{return }\theta$, the final output will be $\pr{\theta\mid\phi_1\dots\phi_n}$. Here, $\theta$ is initialized to the prior $\flip{1/2}$, and the $\observe{\cdot}$ statements condition the output. The semantics of the PPL handles computing the posterior. 

The process of computing the posterior described by a probabilistic program is defined in the literature as \emph{probabilistic inference} and is generally intractable \cite{holtzen2020scaling}. Similarly to SAT solving, there are techniques to solve these problems exactly and efficiently in some cases using Bayesian methods. For example, in Roulette, posterior inference is performed by computing the normalizing constant exactly via knowledge compilation. Importantly, this does not involve any sampling and instead exactly computes the posterior. PPLs that perform approximate inference, say using Markov-chain Monte Carlo, approximate the posterior. The surface syntax using $\observe{\cdot}$, however, remains the same. 

The probabilistic program to compute our desired posterior is defined in Algorithm \ref{alg:posterior} and works as follows. On line (\ref{line:prior}), we initialize the prior $\tlabelparam$ as a fair coin flip. On line (\ref{line:network}), we initialize $\numrecords$ data records to be independent instances of the Bayesian network for the population. On lines (\ref{line:target}) and (\ref{line:marginals}) we condition on the sums of the attributes of the instances of the Bayesian networks to be equal to the given marginals of the private dataset. On line (\ref{line:target}), we choose an arbitrary record to be the target interval if $\tlabelparam$. In this way, we indicate that the target point is in the private dataset.  On line (\ref{line:posterior}) we return the posterior distribution, $\pr{\tlabelparam \mid \popbn,\,\datamarg,\,\target}$, as desired.

\begin{algorithm}
    \caption{\label{alg:posterior} Compute Posterior $\pr{\tlabelparam \mid \popbn,\,\datamarg,\,\target}$} 
    \begin{algorithmic}
    \Require Bayesian network \popbn, marginals \datamarg{} of private dataset, target point \target
    \Procedure{}{\popbn, \datamarg, \target}
        \State $\tlabelparam \gets \flip{0.5}$ \label{line:prior}\Comment{initialize prior}
        \For {$i$ \textbf{in }$[\numrecords]$}
            \State $\datarecord^{(i)} \gets \networkprog{\popbn}$ \label{line:network}\Comment{set networks}
        \EndFor
        \If {$\tlabelparam$}
            \State $\observe{\target + \sum_{i=1}^{\numrecords-1}\datarecord^{(i)}=\datamarg\cdot n}$  \label{line:target}\Comment{one record is $\target$, the rest from $\population$}
        \Else
            \State $\observe{\sum_{i=1}^{\numrecords}\datarecord^{(i)}=\datamarg\cdot n}$  \label{line:marginals}\Comment{all records from $\population$}
        \EndIf
        
        \State \Return $\tlabelparam$ \label{line:posterior}\Comment{return posterior}
    \EndProcedure
\end{algorithmic}
\end{algorithm}

\section{Correctness of the Bayesian Framework}
\label{sec:bayesoncorr}
The likelihood ratio test (LRT) attack with a product assumption from Sect. \ref{sec:classical_attack} is the strongest statistical test for product-distributed populations by the Neyman-Pearson lemma \cite{SankararamanGenomic2009,homer_resolving_2008,survey_2017}. Simple clippings preserve optimality for half and l/r repeated populations. We present three main results showing our proposed Bayesian attack is equivalent to the (clipped) LRT attack on these populations  in the case of the strong threat model (using two lemmas from Appendix \ref{apdx:lemmas}).

The key takeaway here is that, while we had to define bespoke versions of the population marginal attacks to adapt to different attribute dependency, the Bayesian approach is more modular. There, the only change between attacks is defining the network structure to match that of the Bayesian network that describes the population. This means that the Bayesian framework is far more adaptive than the traditional attacks, and does not require manual modification.

\begin{theorem}[Product Distributed Correctness]\label{thm:product}
    Assuming a population that follows a product distribution, the Bayesian framework using Algo. \ref{alg:posterior} to compute the posterior is equivalent to the likelihood ratio test attack from Sect. \ref{sec:classical_attack}.
\end{theorem}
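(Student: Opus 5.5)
We will show that, for a product-distributed population, the Bayesian ratio $\bayesratio$ from eqn. (\ref{eqn:bayestest}) is a strictly increasing function of the likelihood ratio $\freqratio$ in eqn. (\ref{eqn:freq_test}). Then for every threshold $\bayesthresh$ there is a $\classicalthresh$ (and conversely) giving the same decisions, hence the same ROC curve. Here ``equivalent'' is taken in this sense of equal ROC curves.

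First we unfold Algorithm \ref{alg:posterior}. With the prior $\flip{0.5}$, Bayes' rule makes the prior odds cancel, so
\begin{equation*}
\bayesratio=\frac{\Pr\big(\target+\sum_{i=1}^{\numrecords-1}\datarecord^{(i)}=\numrecords\datamarg\big)}{\Pr\big(\sum_{i=1}^{\numrecords}\datarecord^{(i)}=\numrecords\datamarg\big)},
\end{equation*}
where the $\datarecord^{(i)}$ are i.i.d.\ from $\population$. Under the product BN of Fig.~\ref{fig:bn_product}, the coordinates are independent across attributes. Write $s_j=\numrecords\datamarg_j$ for the released column count. Both events then factor coordinatewise into binomial probabilities: the denominator is $\prod_j \binom{\numrecords}{s_j}\popexp_j^{s_j}(1-\popexp_j)^{\numrecords-s_j}$. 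The numerator is $\prod_j \binom{\numrecords-1}{s_j-\target_j}\popexp_j^{s_j-\target_j}(1-\popexp_j)^{\numrecords-1-s_j+\target_j}$, with the convention that the binomial coefficient vanishes when out of range.

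Next we take the ratio coordinate by coordinate. If $\target_j=1$ the factor is $\frac{\binom{\numrecords-1}{s_j-1}}{\binom{\numrecords}{s_j}}\cdot\frac{1}{\popexp_j}=\frac{s_j/\numrecords}{\popexp_j}=\frac{\datamarg_j}{\popexp_j}$. If $\target_j=0$ it is $\frac{(\numrecords-s_j)/\numrecords}{1-\popexp_j}=\frac{1-\datamarg_j}{1-\popexp_j}$. So $\bayesratio=\freqratio$ exactly, and thresholding $\bayesratio>\bayesthresh$ coincides with $\freqratio>\classicalthresh$ for $\classicalthresh=\bayesthresh$.

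The main obstacle is edge cases and rigor in the conditioning. When $\datamarg$ has probability zero under the ``out'' hypothesis, the posterior is undefined. In fact this cannot occur, since $\datamarg$ is generated from $\population$; we will still note that we restrict attention to $0<\popexp_j<1$. When $s_j=0$ and $\target_j=1$, both sides are zero, so they agree by the out-of-range convention. We will state these conventions before the calculation. We will also handle the lemma-style facts inline, namely that sums of independent Bernoullis are binomial and the coefficient identity $\binom{n-1}{s-1}/\binom{n}{s}=s/n$, rather than relying on the appendix lemmas.
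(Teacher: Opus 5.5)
Your proposal is correct and follows the paper's route. The paper proves Theorem~\ref{thm:product} by taking $\bayesthresh=\classicalthresh$ and invoking Lemma~\ref{lem:ratioeq}, whose proof is exactly your computation: the prior odds cancel, the two conditioning events factor coordinatewise into binomial probabilities, and the identities $\binom{\numrecords-1}{s_j-1}/\binom{\numrecords}{s_j}=s_j/\numrecords$ and $\binom{\numrecords-1}{s_j}/\binom{\numrecords}{s_j}=(\numrecords-s_j)/\numrecords$ give $\bayesratio=\freqratio$. Your out-of-range convention for binomial coefficients, which also covers the symmetric boundary case $s_j=\numrecords$, $\target_j=0$, is a small rigor addition that the paper leaves implicit.
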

\begin{proof}
    Because the Bayesian attacker guesses $\tlabel=\tin\iff \bayesratio>\bayesthresh$, and the Likelihood attacker guesses $\tlabel=\tin\iff \freqratio>\classicalthresh$, if we set $\bayesthresh=\classicalthresh$, the attacks are equivalent by Lemma \ref{lem:ratioeq} ($\bayesratio=\freqratio$). 
\end{proof}

\begin{theorem}[Half Distributed Correctness]\label{thm:half}
    Assuming a population that follows a half repeated distribution described in Fig. \ref{fig:bn_half}, the Bayesian framework using Algo. \ref{alg:posterior} to compute the posterior is equivalent to the clipped likelihood ratio test attack from Sec. \ref{sec:freq_attack_half}.
\end{theorem}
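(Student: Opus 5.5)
We follow the same template as the proof of Theorem \ref{thm:product}: we show that the Bayesian decision ratio $\bayesratio$ computed by Algo. \ref{alg:posterior} on the half repeated network equals the clipped ratio $\freqratio_{half}$ of eqn. (\ref{eqn:freq_test_half}). Setting $\bayesthresh=\classicalthresh$ then makes the two decision rules identical.

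The first step is to express the posterior ratio as a ratio of likelihoods. The prior on $\tlabelparam$ is $\flip{0.5}$, so prior odds are $1$ and
\begin{equation*}
\bayesratio=\frac{\Pr\bigl(\target+\sum_{i=1}^{\numrecords-1}\datarecord^{(i)}=\numrecords\datamarg\bigr)}{\Pr\bigl(\sum_{i=1}^{\numrecords}\datarecord^{(i)}=\numrecords\datamarg\bigr)},
\end{equation*}
where the records are i.i.d.\ draws from the half repeated network. This is presumably the content of Lemma \ref{lem:ratioeq}'s first step, and I would reuse it verbatim.

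The second step is a reduction of the half repeated network to a product distribution on $m=\lfloor\dimension/2\rfloor+1$ coordinates. In Fig. \ref{fig:bn_half}, every attribute $j>m$ is a deterministic copy of $\rv_m$. Hence each record $\datarecord$ is determined by its first $m$ coordinates, and those coordinates are independent with marginals $\popexp_1,\dots,\popexp_m$. Consider the event $\sum_i\datarecord^{(i)}=\numrecords\datamarg$. Its coordinates $j>m$ coincide with coordinate $m$ whenever the observed $\datamarg$ is consistent, meaning $\datamarg_j=\datamarg_m$ for $j>m$; this holds almost surely because $\datamarg$ is generated from the same population. Likewise, in the numerator $\target_j=\target_m$ for $j>m$, whether $\target$ is in or out. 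The conditioning events therefore reduce exactly to the corresponding events on the first $m$ coordinates. Both probabilities in $\bayesratio$ thus equal the probabilities of the same program run on the product network of Fig. \ref{fig:bn_product} with $m$ attributes, target $(\target_1,\dots,\target_m)$, and marginals $(\datamarg_1,\dots,\datamarg_m)$.

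The final step applies Lemma \ref{lem:ratioeq} to this $m$-dimensional product instance. It gives $\bayesratio=\freqratio$ for that instance, and $\freqratio$ for the instance is exactly $\freqratio_{half}$ in eqn. (\ref{eqn:freq_test_half}). Taking $\bayesthresh=\classicalthresh$ completes the equivalence.

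The main obstacle is the reduction step. We must argue that the observe statements on the redundant coordinates neither add nor remove probability mass, i.e.\ that the indicator of the full $\dimension$-dimensional equality equals the indicator of the $m$-dimensional one on the support. We also need to handle inconsistent inputs, where $\datamarg$ or $\target$ violates the repetition. In that case both likelihoods vanish under the Bayesian model (or only the numerator does), so we should state that such inputs have probability zero under the problem's data generation and may be excluded. If Lemma \ref{lem:ratioeq} is stated only for the full product network, we may also need to check that its hypotheses hold for the sub-network on the first $m$ nodes, which are root nodes with independent flips.
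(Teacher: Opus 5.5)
Your proof is correct and is essentially the paper's argument: the paper treats the half repeated network as the left/right network with $\side=right$ and cites Lemma~\ref{lem:partialratioeq}, whose proof does exactly what you do (it collapses the repeated block onto a single coordinate and then applies Lemma~\ref{lem:ratioeq}), so you have in effect inlined that lemma and applied Lemma~\ref{lem:ratioeq} once to the $m$-dimensional product. Your direct route is if anything slightly cleaner, since your $m=\lfloor\dimension/2\rfloor+1$ matches eqn.~(\ref{eqn:freq_test_half}) and the $\midpt$ of Fig.~\ref{fig:bn_side}, whereas the paper's proof sets $\midpt=\lfloor\dimension/2\rfloor$, which appears to be off by one.
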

\begin{proof}
    Because the Bayesian attacker guesses $\tlabel=\tin\iff \bayesratio>\bayesthresh$, and the Likelihood attacker guesses $\tlabel=\tin\iff \freqratio>\classicalthresh$. If we set $\bayesthresh=\classicalthresh$, $\side=right$, and $\midpt=\lfloor\dimension/2\rfloor$, by definition $\freqratio=\freqratio_{\midpt,\side}$, and the attacks are equivalent by Lemma \ref{lem:partialratioeq} ($\bayesratio_{half}=\freqratio_{\midpt,\side}$). 
\end{proof}

\begin{theorem}[Left/Right Distributed Correctness]\label{thm:side}
    Assuming a population that follows a left/right repeated described in Fig. \ref{fig:bn_side}, the Bayesian framework using Algo. \ref{alg:posterior} to compute the posterior is equivalent to the likelihood ratio test attack from Sect. \ref{sec:freq_attack_side}.
\end{theorem}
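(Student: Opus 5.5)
The argument follows the pattern of Theorems \ref{thm:product} and \ref{thm:half}: first establish a ratio identity between $\bayesratio$ and the appropriate clipped likelihood ratio, then match thresholds. The difference is that the l/r population carries a latent coin $\side$, so the identity has to hold for whichever side the released statistics reveal.

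The first step is to show that $\side$ is determined by $\datamarg$. Under $\side=right$, the attributes $X_{\midpt},\dots,X_{\dimension}$ are copies of a single bit, so in every record, and hence in the sum, these coordinates are equal. Under $\side=left$, the coordinates $X_1,\dots,X_{\midpt-1}$ are equal. Therefore, in both branches of Algorithm \ref{alg:posterior}, the event $\observe{\cdot}$ (which includes $\target$ in the $\tin$ branch) forces the repeated block of $\datamarg$ to be constant. Whenever the non-repeated block of $\datamarg$ is not also constant, this identifies $\side$. For even $\numrecords$ I would argue that the degenerate case, where both blocks are constant, can be treated by symmetry of the two sides, or has vanishing probability. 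This is the step I expect to be the main obstacle.

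Once $\side$ is identified, the second step is to decompose the posterior as
\[
\pr{\tlabelparam\mid\popbn,\datamarg,\target}=\sum_{s}\pr{\tlabelparam,\side=s\mid\popbn,\datamarg,\target}.
\]
In this sum, every term with $s$ inconsistent with $\datamarg$ has zero likelihood in both the numerator and the denominator of $\bayesratio$. Hence $\bayesratio$ equals the posterior odds computed in the sub-network with $\side$ fixed to the revealed value. Here I use the assumption that the side coin is independent of $\tlabelparam$ and of $\target$'s label; its prior cancels in the ratio.

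The third step is to observe that, conditioned on $\side=right$, the population is exactly the half repeated population of Fig. \ref{fig:bn_half}. Lemma \ref{lem:partialratioeq} then gives $\bayesratio=\freqratio_{\midpt,right}$. Conditioned on $\side=left$, the population is the mirror image, obtained by reindexing $j\mapsto \dimension+1-j$. Applying the same lemma after reindexing gives $\bayesratio=\freqratio_{\midpt,left}$. Thus in either case $\bayesratio=\freqratio_{\midpt,\side}$ with $\side$ read off from $\datamarg$, which is precisely the clipped attack of Sec. \ref{sec:freq_attack_side}.

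Finally, I would set $\bayesthresh=\classicalthresh$. This makes the two decision rules coincide, and the attacks are equivalent. Any residual care, namely the consistency check that $\target$'s repeated block is constant under the $\tin$ hypothesis, is absorbed by the zero-likelihood argument of the first step.
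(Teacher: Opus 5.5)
Your third and final steps match the paper's proof: Lemma~\ref{lem:partialratioeq} gives $\bayesratio=\freqratio_{\midpt,\side}$, and one sets $\bayesthresh=\classicalthresh$. The gap is in your first two steps. There you let Algorithm~\ref{alg:posterior} marginalize over a global latent side and claim the sum collapses onto the side revealed by $\datamarg$. Write $\pi_s$ for the prior on side $s$, and $L^{\tin}_s$, $L^{\tout}_s$ for the likelihoods of the two $\observe{\cdot}$ events under side $s$. Then $\bayesratio=(\pi_r L^{\tin}_r+\pi_l L^{\tin}_l)/(\pi_r L^{\tout}_r+\pi_l L^{\tout}_l)$, and the collapse fails in two ways.

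\textbf{The degenerate event.} The case you flag is not negligible. It has positive probability for every finite $\numrecords,\dimension$, for example whenever each record happens to be constant on the other side's block. On that event $\bayesratio$ is a prior-weighted mediant of the two side-specific odds $L^{\tin}_s/L^{\tout}_s$, and in general it equals neither. No symmetry argument repairs this, and the clipped LRT cannot read off the side there either.

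\textbf{The numerator.} More seriously, the zero-likelihood claim is false in the numerator. In the $\tin$ branch, $\observe{\cdot}$ constrains $\numrecords\datamarg-\target$, not $\datamarg$. Algorithm~\ref{alg:posterior} never conditions on $\target$ being a draw from the side-$s$ network. So $L^{\tin}_{s'}$ can be positive for a wrong side $s'$ while $L^{\tout}_{s'}=0$. Concretely, take $\dimension=4$ (so $\midpt=3$) and $\numrecords=2$. Let the true side be right ($X_4=X_3$), with left meaning $X_1=X_2$. Take an in-target $\target=(1,0,1,1)$ and let the other record be $(0,0,0,0)$. Then $\numrecords\datamarg=(1,0,1,1)$ forces $L^{\tout}_l=0$. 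But $\numrecords\datamarg-\target=(0,0,0,0)$ gives $L^{\tin}_l>0$. Hence $\bayesratio=\freqratio_{\midpt,right}+(\pi_l/\pi_r)L^{\tin}_l/L^{\tout}_r>\freqratio_{\midpt,right}$. Your closing sentence says this is ``absorbed by the first step,'' but it is not. Under your reading, the exact identity $\bayesratio=\freqratio_{\midpt,\side}$ simply does not hold.

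The paper avoids all of this because Lemma~\ref{lem:partialratioeq} is stated and proved for a \emph{fixed} $\side$. In the setup, $\dataset$ is i.i.d.\ from a single population $\population$, and the strong attacker receives $\aux{\population}=\popbn$, so the side is part of the network handed to Algorithm~\ref{alg:posterior}. The lemma's proof uses $\datamarg_j=\datamarg_{\midpt}$ and $\target_j=\target_{\midpt}$ on the repeated block. It splits the ratio into the independent coordinates and one representative repeated coordinate, and applies Lemma~\ref{lem:ratioeq} to each. The left case is handled symmetrically, so your reindexing is unnecessary, though harmless.

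Note also that if the coin were a node of $\popbn$, Algorithm~\ref{alg:posterior} would draw it independently for each record, so there would be no single side to identify at all. So drop your first two steps and state that the side is fixed by $\popbn$; your remaining steps are then the whole proof. If you genuinely want a side-agnostic Bayesian attacker, you would need to modify the algorithm to condition on $\target$ under side $s$ in both branches. Even then, exact equivalence fails on the degenerate event.
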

\begin{proof}
    Because the Bayesian attacker guesses $\tlabel=\tin\iff \bayesratio>\bayesthresh$, and the Likelihood attacker guesses $\tlabel=\tin\iff \freqratio_{\midpt,\side}>\classicalthresh$, if we set $\bayesthresh=\classicalthresh$, the attacks are equivalent by Lemma \ref{lem:partialratioeq} ($\bayesratio=\freqratio_{\midpt,\side}$).
\end{proof}

\section{Evaluating the Bayesian Framework using Probabilistic Programming}
\label{sec:bayesonbn}
\begin{table}
\centering
\begin{tabular}{lllll}
\toprule
BN (\dimension) & Attack  & AUC & Std. \\
\midrule
half rep. (25) & \textbf{Bayes} & \textbf{0.743} & 0.103 \\
half rep. (25) & LRT & 0.696 & 0.107 \\
half rep. (25) & IP & 0.683 & 0.101 \\
\hline
cancer\cite{bn_cancer} (10) & \textbf{Bayes} & \textbf{0.744} & 0.106 \\
cancer\cite{bn_cancer} (10) & LRT & 0.710 & 0.111 \\
cancer\cite{bn_cancer} (10) & IP & 0.688 & 0.111 \\
\hline
quake\cite{bn_earthquake} (10) & \textbf{Bayes} & \textbf{0.594} & 0.082 \\
quake\cite{bn_earthquake} (10) & LRT & 0.534 & 0.100 \\
quake\cite{bn_earthquake} (10) & IP & 0.533 & 0.101 \\
\bottomrule
\end{tabular}
\quad
\begin{tabular}{lllll}
\toprule
BN (\dimension) & Attack  & AUC & Std. \\
\midrule
asia \cite{bn_asia} (16) & \textbf{Bayes} & \textbf{0.763} & 0.123 \\
asia \cite{bn_asia} (16) & LRT & 0.652 & 0.150 \\
asia \cite{bn_asia} (16) & IP & 0.628 & 0.154 \\
\hline
survey\cite{bn_survey} (14) & \textbf{Bayes} & 0.837 & 0.095 \\
survey\cite{bn_survey} (14) & LRT & \textbf{0.839} & 0.098 \\
survey\cite{bn_survey} (14) & IP & 0.815 & 0.102 \\
\hline
sachs\cite{bn_sachs} (15) & \textbf{Bayes} & \textbf{0.906} & 0.066 \\
sachs\cite{bn_sachs} (15) & LRT & 0.851 & 0.095 \\
sachs\cite{bn_sachs} (15) & IP & 0.824 & 0.107 \\
\bottomrule
\end{tabular}
\caption{Comparing our Bayesian (Bayes) attack AUC with the likelihood ratio test (LRT) and inner product (IP) attacks under the strong threat model. For the Sachs Bayesian network (BN), we define the output nodes to be the nodes in the longest path of the BN. In all other cases, the output set is the set of all nodes.}
\label{tab:accuracy}
\end{table}

The theoretical results from Sect. \ref{sec:bayesoncorr} indicate that the Bayesian approach finds the optimal clipping in some cases. Now, we present experimental results to demonstrate the utility of our method when there is no known fix for population marginal attacks. We evaluate our overall performance for the strong, weak, and weakest attacker threat models by comparing AUC to the likelihood ratio test (LRT) and inner product (IP) attacks on various Bayesian networks (BNs). We further vary the output nodes of one BN to understand attack performance over attribute connectedness and find that incorporating attribute structure into the attack becomes more valuable as the dimension and attribute dependency increases. We also investigate the impact of increasing the number of records for selected examples and provide scalability analysis of this solution on a variety of benchmark populations using the state-of-the-art Roulette probabilistic programming language for exact probabilistic inference.

Our experiments are written in Python and run on a 16 core AMD EPYC with 64GB of RAM. We compute the exact Bayesian posterior (Alg. \ref{alg:posterior}) using the Roulette probabilistic programming language \cite{moy2025roulette}. We pull standard benchmarking BNs from the bnlearn repo \cite{bnlearn} sampled using Pgmpy \cite{pgmpy}, and average AUCs over 40 trials (unique private datasets drawn i.i.d.), with 20 in and 20 out targets. We choose $\numrecords<\dimension{}$ to ensure enough signal for attack effectiveness.

\subsection{Attack Performance on Selected Bayesian Networks (Strong Threat Model)}\label{sec:performance}
Our main objective is to understand how incorporating a Bayesian model of the population into the attack framework affects performance in practice. We showed theoretically in Sec. \ref{sec:bayesoncorr} that our Bayesian attack recovers the clipped LRT attack on the half and left/right repeated populations. In Table \ref{tab:accuracy}, we confirm experimentally that the Bayesian attack outperforms the standard LRT and IP attacks on the half repeated population for the strong threat model. 

We also want to look at more realistic populations with more complicated dependency structures that go beyond independent or repeated attributes. In our experiments, we consider a selection of standard benchmarking BNs, and compare the AUC of our Bayesian posterior attack to the population marginal attacks. We see in Table \ref{tab:accuracy} that the Bayesian approach implemented in Roulette performs as well as or better than the population marginal attacks. Even with as little as 15 correlated parameters as in the Sachs BN, we see a mean attack AUC that is 5\% higher than the LRT attack, and 8\% higher than the IP attack.

\begin{figure*}
    \centering
    \begin{subfigure}{.4\linewidth}
        \centering
        \scalebox{0.9}{\begin{tabular}{llllll}
\toprule
BN (\dimension) & $m$ & Attack  & AUC & Std. \\
\midrule
asia \cite{bn_asia} (16) & 10 & \textbf{Bayes} & \textbf{0.773} & 0.108 \\
asia \cite{bn_asia} (16) & 10 & LRT       & 0.699 & 0.133 \\
asia \cite{bn_asia} (16) & 10 & IP     & 0.685 & 0.136 \\
\midrule
asia \cite{bn_asia} (16) & 50 & \textbf{Bayes}  & \textbf{0.770} & 0.115 \\
asia \cite{bn_asia} (16) & 50 & LRT        & 0.691 & 0.137 \\
asia \cite{bn_asia} (16) & 50 & IP      & 0.678 & 0.137 \\
\midrule
asia \cite{bn_asia} (16) & 100 & \textbf{Bayes}  & \textbf{0.773} & 0.112 \\
asia \cite{bn_asia} (16) & 100 & LRT        & 0.691 & 0.138 \\
asia \cite{bn_asia} (16) & 100 & IP      & 0.676 & 0.142 \\
\bottomrule
\end{tabular}}
        \caption{Public Auxiliary Dataset and Structure (Weak Attacker)}
        \label{fig:asia_weak}
    \end{subfigure}
    \qquad
    \hspace{2em}
    \begin{subfigure}{.4\linewidth}
        \centering
\scalebox{0.9}{\begin{tabular}{lllll}
\toprule
BN (\dimension) & $m$ & Attack  & AUC & Std. \\
\midrule
asia \cite{bn_asia} (16) & 10 & \textbf{Bayes}  & \textbf{0.770} & 0.108 \\
asia \cite{bn_asia} (16) & 10 & LRT  & 0.689 & 0.137 \\
asia \cite{bn_asia} (16) & 10 & IP  & 0.674 & 0.139 \\
\midrule
asia \cite{bn_asia} (16) & 50 & \textbf{Bayes}  & \textbf{0.768} & 0.110 \\
asia \cite{bn_asia} (16) & 50 & LRT  & 0.685 & 0.133 \\
asia \cite{bn_asia} (16) & 50 & IP  & 0.669 & 0.136 \\
\midrule
asia \cite{bn_asia} (16) & 100 & \textbf{Bayes}  & \textbf{0.768} & 0.110 \\
asia \cite{bn_asia} (16) & 100 & LRT  & 0.683 & 0.132 \\
asia \cite{bn_asia} (16) & 100 & IP  & 0.669 & 0.135 \\
\bottomrule
\end{tabular}}
        \caption{Auxiliary Dataset and No Structure (Weakest Attacker)}
        \label{fig:asia_weakest}
    \end{subfigure}
    \caption{Comparing our Bayesian (Bayes) attack AUC with the likelihood ratio test (LRT) and inner product (IP) attacks under the weak and weakest threat models on the Asia benchmarking Bayesian network with proxy datasets of $m$ records.}
    \label{fig:weak_weakest}
\end{figure*}

\subsection{Attack Performance Under Weaker Threat Models}\label{sec:weaker}
To evaluate the weaker threat models, we modify the auxiliary information available to the attackers. For the weak threat model (public auxiliary dataset and structure), the attackers get access to the dependency structure of the population Bayesian network and a proxy dataset of $m$ records, drawn from the population. To evaluate the Bayesian attacker, we use the Maximum Likelihood Estimator from the PGMPY package \cite{pgmpy} to learn the joint probabilities of each node in the Bayesian network from the proxy dataset. This learned Bayesian network is then used as the input to Algorithm \ref{alg:posterior}. For the weakest threat model (auxiliary dataset and no structure), the attackers only get access to a proxy dataset of $m$ records drawn from the population. To evaluate the Bayesian attacker here, we use the PC algorithm from the PGMPY package \cite{pgmpy} to learn both the structure and joint probabilities from the proxy data, and again use this as input to Algorithm \ref{alg:posterior}. Because the population marginal attacks are fundamentally unable to leverage any information about the dependency structure of the features of the dataset, in both the weak and weakest attacker cases we approximate the dataset means by computing the marginals of the proxy dataset.

In Fig. \ref{fig:weak_weakest} we see the Bayesian attacker maintains a similar advantage under the weak threat model (approximating the joint probabilities), and in the weakest threat model (learning the structure of the Bayesian network and approximating the joint probabilities) for proxy datasets of size $10$, $50$, and $100$. We include results for the Asia benchmarking Bayesian network \cite{bn_asia} in Fig. \ref{fig:weak_weakest}, and provide results for additional benchmarking Bayesian networks in Appendix \ref{apdx:weak_weakest}. This is evidence that this attack remains effective for an attacker who has similarly distributed data to the private dataset.

\begin{figure*}
    \centering
    \begin{subfigure}{.2\linewidth}
        \centering
        \includegraphics[width=\linewidth]{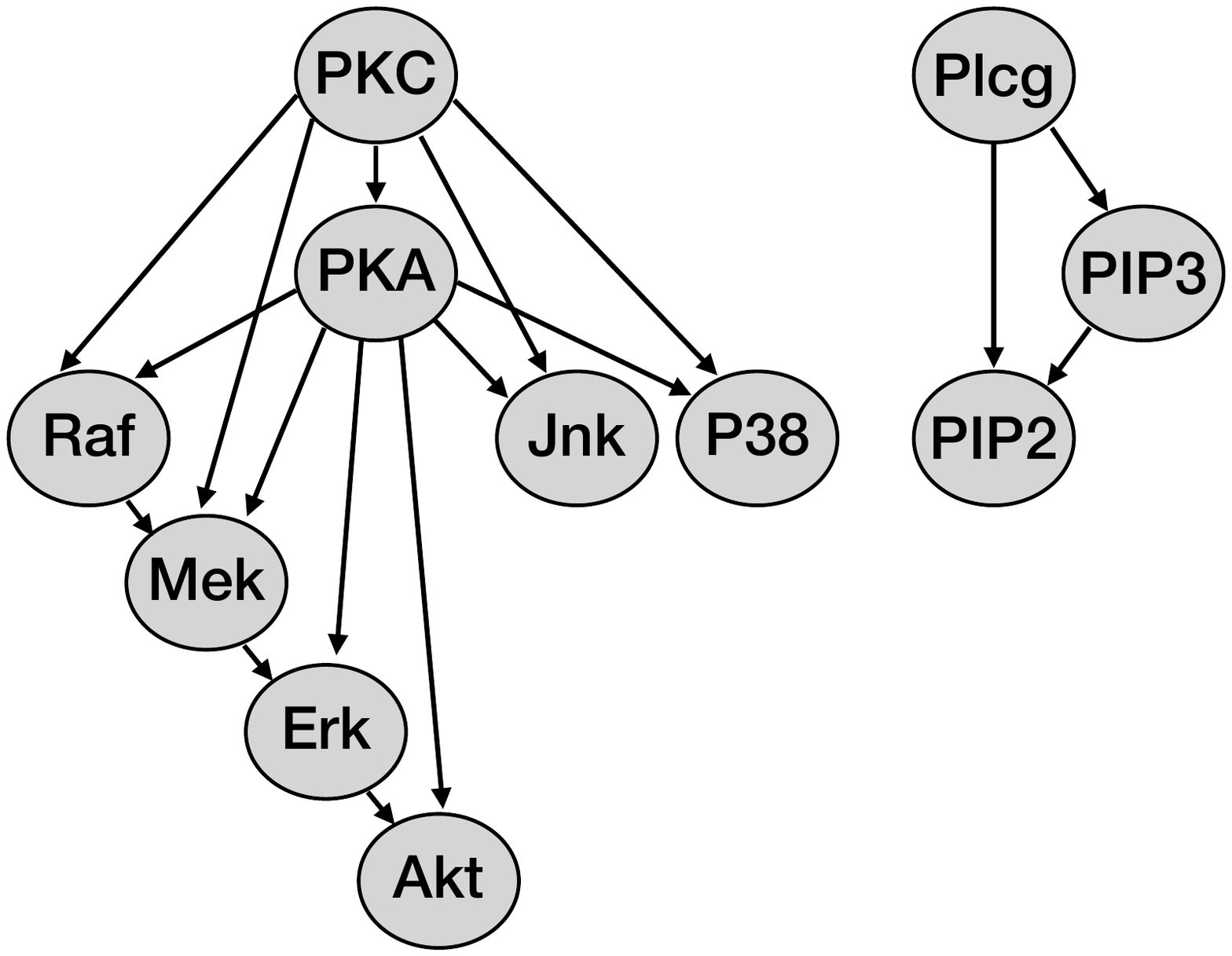}
        \caption{Structure of the Sachs BN \cite{bn_sachs}.}
        \label{fig:bn_sachs}
    \end{subfigure}
    \begin{subfigure}{.73\linewidth}
    \centering
    \begin{tabular}{llllll}
    \toprule
    Explanation & Output Nodes & d & Attack & AUC & Std. \\
    \midrule
    right sub. & Plcg,PIP3,PIP2 & 9 & \textbf{Bayes} & \textbf{0.773} & 0.097 \\
    right sub. & Plcg,PIP3,PIP2 & 9 & LRT & 0.746 & 0.101 \\
    right sub. & Plcg,PIP3,PIP2 & 9 & IP & 0.733 & 0.102 \\
    \hline
    leaf & Akt,Jnk,P38,PIP2 & 12 & \textbf{Bayes} & \textbf{0.856} & 0.085 \\
    leaf & Akt,Jnk,P38,PIP2 & 12 & LRT & 0.827 & 0.088 \\
    leaf & Akt,Jnk,P38,PIP2 & 12 & IP & 0.802 & 0.103 \\
    \hline
    leaf + root & PKC,Akt,Jnk,P38,Plcg,PIP2 & 18 & \textbf{Bayes} & \textbf{0.887} & 0.066 \\
    leaf + root & PKC,Akt,Jnk,P38,Plcg,PIP2 & 18 & LRT & 0.867 & 0.075 \\
    leaf + root & PKC,Akt,Jnk,P38,Plcg,PIP2 & 18 & IP & 0.831 & 0.080 \\
    \hline
    leaf + parent & PKA,Akt,Jnk,P38,Plcg,PIP3 & 18 & \textbf{Bayes} & \textbf{0.892} & 0.073 \\
    leaf + parent & PKA,Akt,Jnk,P38,Plcg,PIP3 & 18 & LRT & 0.862 & 0.076 \\
    leaf + parent & PKA,Akt,Jnk,P38,Plcg,PIP3 & 18 & IP & 0.819 & 0.084 \\
    \hline
    long path (L) & PKC,Raf,Mek,Erk,Akt & 15 & \textbf{Bayes} & \textbf{0.906} & 0.066 \\
    long path (L) & PKC,Raf,Mek,Erk,Akt & 15 & LRT & 0.851 & 0.095 \\
    long path (L) & PKC,Raf,Mek,Erk,Akt & 15 & IP & 0.824 & 0.107 \\
    \hline
    long path (R) & PKC,PKA,Mek,Erk,Akt & 15 & \textbf{Bayes} & \textbf{0.894} & 0.063 \\
    long path (R) & PKC,PKA,Mek,Erk,Akt & 15 & LRT & 0.853 & 0.084 \\
    long path (R) & PKC,PKA,Mek,Erk,Akt & 15 & IP & 0.834 & 0.085 
    \\
    \bottomrule
    \end{tabular}
    \caption{Mean attack AUC for different output nodes for the Sachs BN.}
    \label{fig:out_node_results}
    \end{subfigure}
    \caption{Impact of different output nodes for the Sachs BN. Output sets with more shared evidence leads to improved Bayes performance and more pronounced differences between attacks.}
    \label{fig:out_node_study}
\end{figure*}

\subsection{Impact of Output Nodes on Attack Performance}
The Bayesian approach provides better average AUC across the board, but we want to see how the number of output parameters and strength of their relationship impacts the performance gains provided by incorporating structural information into the attack. In Fig. \ref{fig:out_node_study}, we study the Sachs BN more closely, selecting different interesting sets of output nodes under the strong threat model. For output sets that contain only the leaf (or leaf and root) nodes of the BN, there is little direct dependency between output nodes, and the Bayesian approach boasts modest improvements, about 2-3\%. However, for output sets with strongly connected nodes, such as the leftmost longest path or the leaves with their shared parents, we see a significant improvement when the attack considers a Bayesian model of the population. Here, we see that the Bayesian method beats the LRT attack by 4-5\%, and beats the IP attack by 8-9\%.

This indicates that increasing the number of strongly connected nodes in a BN (and therefore increasing the dependency between attributes) means that the Bayesian approach is able to leverage more important information to improve its attack. Further, the results indicate that incorporating the attribute dependency structure in the attack is more effective when there are strong dependencies, regardless of the size of the output set.

\begin{figure*}
    \centering
    \begin{subfigure}{.4\linewidth}
        \centering
        \includegraphics[width=\linewidth]{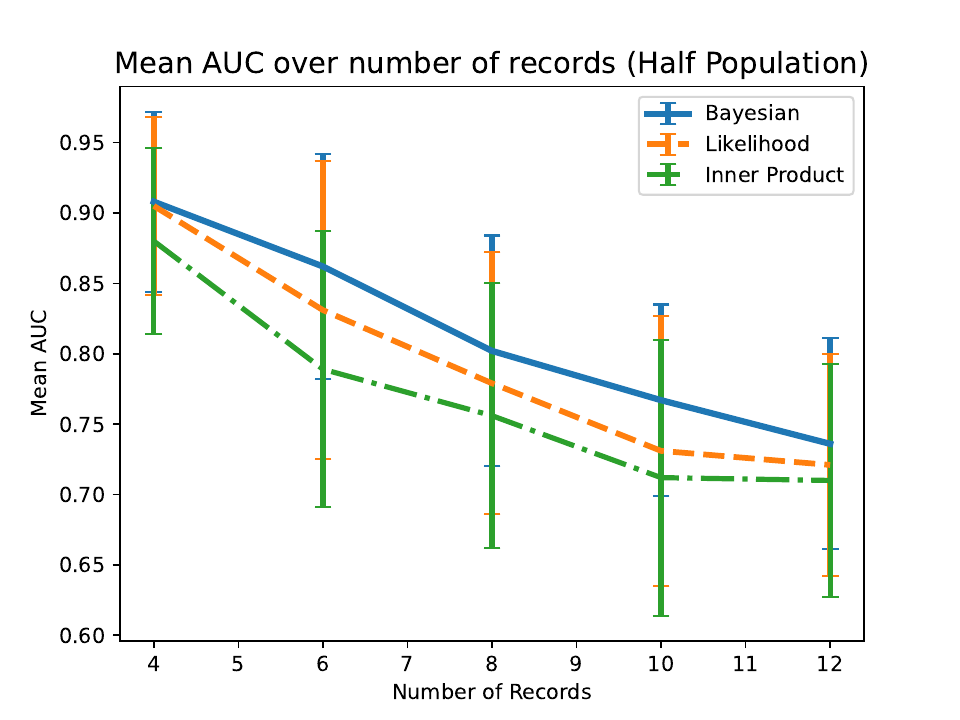}
        \caption{Half Repeated Population}
        \label{fig:vary_n_half}
    \end{subfigure}
    \begin{subfigure}{.4\linewidth}
        \centering
        \includegraphics[width=\linewidth]{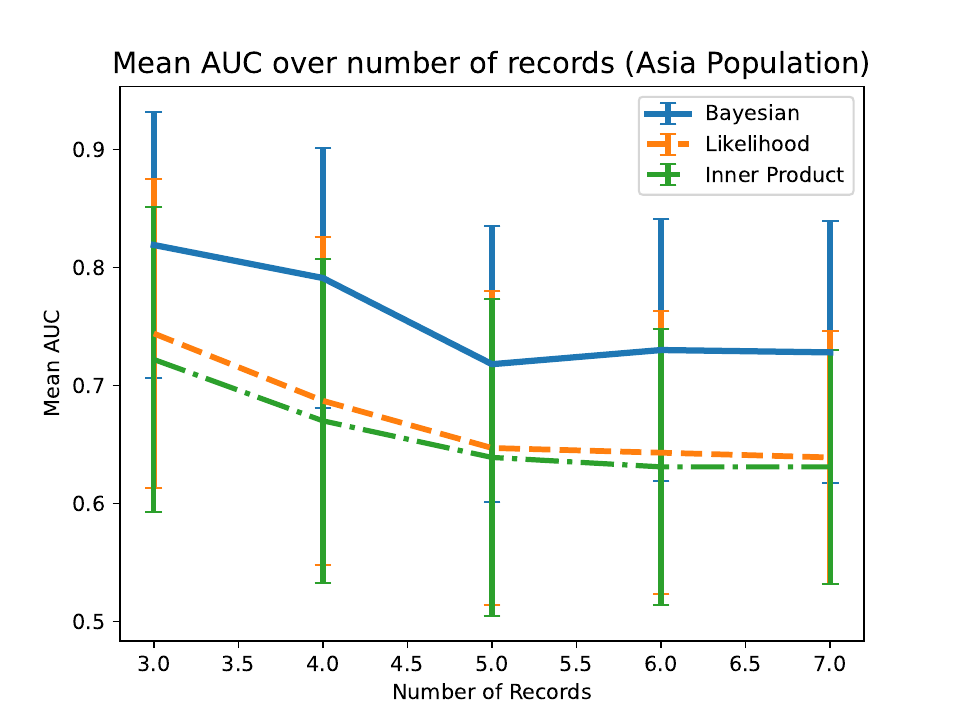}
        \caption{Asia Population (all nodes output)}
        \label{fig:vary_n_asia}
    \end{subfigure}
    \begin{subfigure}{.4\linewidth}
        \centering
        \includegraphics[width=\linewidth]{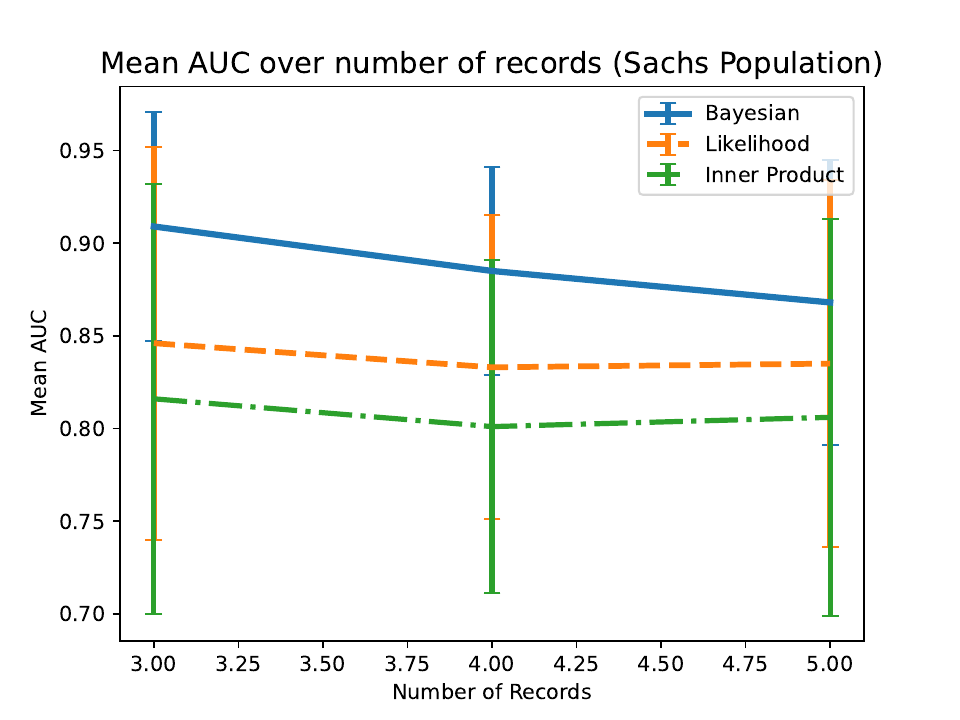}
        \caption{Sachs Population (longest path left output)}
        \label{fig:vary_n_sachs}
    \end{subfigure}
    \caption{Comparing attack performance as $\numrecords$ increases across different correlated populations. Values of $\numrecords$ chosen in relationship to the dimension of the population.}
    \label{fig:vary_n}
\end{figure*}
\subsection{Impact of Number of Samples on Attack Performance}\label{apdx:eval_num_samples}
To delve a bit further into the analysis of the performance trends of our approach, we wanted to understand how the number of records in the private dataset impacts the performance of the different attacks.

In Figure \ref{fig:vary_n}, we plot the mean AUC for different attacks over the number of records for the Half and Sachs populations. As expected, all attacks perform worse as the number of records grows. However, the differences between the three methods remains fairly constant. This means that incorporating information about the population results in an attack which deteriorates at a similar rate to the attacks which assume a product distributed population. Therefore, even as we increase the dimension of the data and the number of records, we still see gains from this Bayesian approach to the problem over the previous approaches.

Our technique is an exact method to compute the posterior distribution in full, which does not face pitfalls (with respect to outliers) associated with sampling. We see in Fig. \ref{fig:scale_n} that performance gains are consistent as $n$ increases for various sizes of Bayesian networks. This follows the theoretical result that the ratio between $d$ and $n$ is important, but the absolute size of $n$ is not.

\begin{figure*}
    \centering
    \begin{subfigure}{.45\linewidth}
        \centering
        \scalebox{0.9}{\begin{tabular}{llllll}
\toprule
\makecell{BN} & \makecell{\#\\Node} & \makecell{\#\\Param} & \makecell{\#\\ Out. \\Node} & \makecell{\#\\ Out. \\Param} & \makecell{Time\\(s)} \\
\midrule
product & 10 & 10 & 10 & 10 & 0.567 \\
product & 20 & 20 & 20 & 20 & 0.617 \\
product & 30 & 30 & 30 & 30 & 0.656 \\
\hline
half rep. & 10 & 10 & 10 & 10 & 0.551 \\
half rep. & 20 & 20 & 20 & 20 & 0.592 \\
half rep. & 30 & 30 & 30 & 30 & 0.614 \\
\hline
l/r rep. & 10 & 10 & 10 & 10 & 0.590 \\
l/r rep. & 20 & 20 & 20 & 20 & 2.473 \\
\hline
cancer & 5 & 10 & 5 & 10 & 0.669 \\
\hline
earthquake & 5 & 10 & 5 & 10 & 0.686 \\
\hline
asia & 8 & 16 & 8 & 16 & 0.710 \\
\hline
survey & 6 & 14 & 6 & 14 & 0.740 \\
\hline
sachs (leaves) & 11 & 33 & 4 & 12 & 0.884 \\
sachs (path) & 11 & 33 & 5 & 15 & 1.143 \\
sachs (leaf/rt) & 11 & 33 & 6 & 18 & 0.920 \\
\bottomrule
\end{tabular}}
        \caption{Mean computational for posterior using Roulette for different dimensions ($\numrecords=4$). For the Sachs Bayesian network, we use different sets of output nodes, so the number of output parameters is different from the number of parameters in the whole network.}
        \label{fig:scale_dimension}
    \end{subfigure}
    \qquad
    \hspace{2em}
    \begin{subfigure}{.34\linewidth}
        \centering
\scalebox{0.9}
{\includegraphics[width=\linewidth]{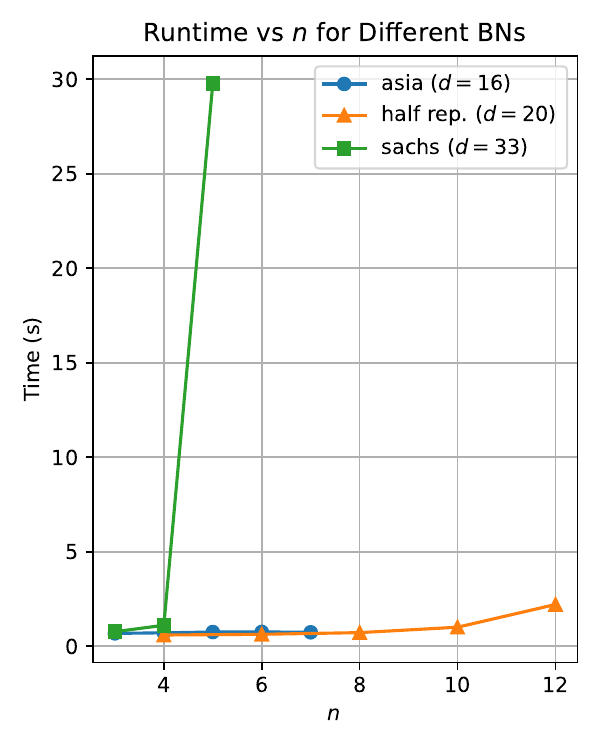}}

        \caption{Mean computational time for posterior using Roulette for different numbers of records. For the Sachs Bayesian network, we use the longest path (left) set of output nodes.}
        \label{fig:scale_n}
    \end{subfigure}
    \caption{Comparing exact posterior computation time in Roulette for various dimensions and numbers of records on various BNs.}
    \label{fig:scaling}
\end{figure*}

\subsection{Scalability}\label{sec:scalability}
To evaluate the scalability of our framework implementation, we look at how long it takes to compute the posterior in Algorithm \ref{alg:posterior} for individual target points as we increase the dimension and dependence between dimensions, as well as the number of records. Times listed are averaged over 20 unique private datasets, and 40 targets points (20 in, 20 out) for each dataset.

In Fig. \ref{fig:scale_dimension} we look at the scalability of our technique on different dimensions. We see that, for constant $\numrecords$, increasing the dimensionality of the Bayesian network results in sub-linear scaling of the runtime. In Fig. \ref{fig:scale_n} we look at the scalability of our technique on different numbers of records. We see that larger, more connected networks such as the left/right (l/r) repeated distribution and Sachs Bayesian network see significantly increasing runtimes as the number of records increases. 

While these results are promising, we find that Roulette has the hardest time scaling as we increase the number of samples in the private dataset, especially as the number of nodes in the Bayesian network increases. However, because of the trends we see in the AUC for the different approaches as we increase the number of samples and connected output nodes, we hypothesize that using larger dimensions will only amplify the gains that we get from using prior information about the population. These results are extremely promising, and indicate that there is a lot of utility to be gained by incorporating information about the dependency structure of the population into the attack framework.

Though this work does not yet scale to large examples, we have shown through theoretical and experimental results that using a Bayesian framework for designing an attack on a private dataset could have significant impact on attack utility when we have prior information. To compute larger $n$, one could use a PPL that supports approximate inference, though this will introduce complexity as there are many candidate approximate methods which require further tuning. Another interesting future direction would be to consider simplifications of the Bayesian network descriptions to attempt to improve the scalability of this approach on larger examples with more sample sizes, while still maintaining utility from the additional information gained from the Bayesian model of the data.

\section{Related Works}
\label{sec:related}
\textbf{Membership Inference.} The membership inference problem for statistical release has been applied to many areas. The primary attacks with strong theoretical results are likelihood ratio test (LRT) attack and its variants \cite{SankararamanGenomic2009,homer_resolving_2008,survey_2017} and the inner product (IP) attack \cite{inner_product_2015}, which typically only use the population marginals for analysis (see Sect. \ref{sec:classical_attack} and \ref{sec:inner_product_attack}). In contrast, our Bayesian approach leverages a complete model of the population in order to create a more adaptive attack, which we demonstrate recovers or outperforms the LRT and IP attacks.

\textbf{Defenses Against Membership Inference Attacks.} The primary defense against membership inference attacks for statistical release is to add random noise to the released statistics to make them \textit{differentially private} \cite{survey_2017,dwork2014textbook,dwork2006dp}. This technique is effective at limiting the utility of membership inference attacks, however there is a tradeoff between privacy and utility as the amount of noise increases. An interesting future direction of this work would be to evaluate how the Bayesian attacker responds to noise in comparison to the population marginal attacks when there is strong dependency between attributes, and how much randomness has to be applied (and thereby loss of accuracy) to offset the gains made by taking dependency structure into account.

\textbf{Bayesian Techniques for Privacy.} Bayesian inference techniques have been used in other capacities for evaluating privacy of systems. In \cite{oakley2024synthesizing}, Oakley \etal{} use exact Bayesian inference to verify differential privacy. In \cite{pardo2021privug}, Pardo \etal{} use probabilistic inference to debug privacy risks in program code. Techniques such as stochastic conditioning have also been proposed for incorporating into privacy attacks \cite{tolpin2021probabilistic}. There are also many techniques for making programming languages and program logics for differential privacy, but these mostly focus on verification \cite{near2019duet,barthe2016programming,gaboardi2013linear,reed2010distance}. In some instances, Bayesian networks are discussed in relationship to the membership inference problem. In \cite{bayesnetMI2021}, Murakonda \etal{} posit a problem setup where there are no assumptions on the population data. Instead of a statistic over the population, they are given a fixed graphical model (Bayesian network) with parameters learned from the population data. This contrasts our work, where the population is described as a Bayesian network, and the target private dataset is drawn from this Bayesian network.

\textbf{Probabilistic Inference and Probabilistic Programming.} Our implementation relies on the field of probabilistic programming for exact probabilistic inference. In this paper, we use the state-of-the-art PPL, Roulette \cite{moy2025roulette}, for efficiency and expressivity. Related languages \cite{holtzen2020scaling} also perform exact probabilistic inference and could be used instead, however they may not perform as well on larger examples. Other ongoing work on approximate Bayesian inference could be applied as well, though with potential performance tradeoffs \cite{sunnaaker2013approximate}. Recent work by Stites \etal{} also makes advances on hybrid methods of probabilistic inference, using both exact and approximate components \cite{stites2025multi}. In this work, we consider populations with discrete features and use discrete PPLs to compute the posterior, however there also exist PPLs for probabilistic inference on continuous random variables including Stan \cite{stan} and Pyro \cite{bingham2019pyro}.

\textbf{Machine Learning Privacy.}
Another related line of work is privacy attacks for machine learning, which uses statistical techniques for estimating privacy leakage in machine learning algorithms to find lower bounds~\cite{AuditingDP,nasr2021adversary,andrew2023oneshot,nasr2023tight,pillutla2023unleashing,steinke2023auditing,ye2022enhanced,zarifzadeh2024low,tong2025much,tao2025range,nasr2018comprehensive,shokri2017membership,kumar2020mlprivacy}. These methods efficiently analyze privacy over large datasets, but are applied to machine learning privacy rather than statistical release, and therefore have limited theoretical guarantees.

\section{Conclusions and Future Work}
\label{sec:conclusion}
This paper is a first step into a rich and largely unexplored area of Bayesian analysis of the membership inference problem. We present promising theoretical and experimental results that incorporating a Bayesian prior can create an adaptive attack that is more effective on populations with attribute dependencies than existing attacks that rely on only the population marginals. While our proposed method for computing the Bayesian posterior via exact probabilistic inference is highly effective, it is not the only approach. For example, approximate probabilistic inference methods could improve scalability, potentially at the expense of optimality and performance.

\bibliographystyle{IEEEtran}
\bibliography{refs.bib}

@article{survey_2017,
    title = {Exposed {A} {Survey} of {Attacks} on {Private} {Data}},
    volume = {4},
    issn = {2326-8298, 2326-831X},
    url = {https://www.annualreviews.org/doi/10.1146/annurev-statistics-060116-054123},
    doi = {10.1146/annurev-statistics-060116-054123},
    language = {en},
    number = {1},
    urldate = {2024-05-10},
    journal = {Annual Review of Statistics and Its Application},
    author = {Dwork, Cynthia and Smith, Adam and Steinke, Thomas and Ullman, Jonathan},
    month = mar,
    year = {2017},
    pages = {61--84},
}

@INPROCEEDINGS{inner_product_2015,
  author={Dwork, Cynthia and Smith, Adam and Steinke, Thomas and Ullman, Jonathan and Vadhan, Salil},
  booktitle={2015 IEEE 56th Annual Symposium on Foundations of Computer Science}, 
  title={Robust Traceability from Trace Amounts}, 
  year={2015},
  volume={},
  number={},
  pages={650-669},
  doi={10.1109/FOCS.2015.46}}

@article{SankararamanGenomic2009,
    title = {Genomic privacy and limits of individual detection in a pool},
    volume = {41},
    copyright = {2009 Springer Nature America, Inc.},
    issn = {1546-1718},
    url = {https://www.nature.com/articles/ng.436},
    doi = {10.1038/ng.436},
    language = {en},
    number = {9},
    urldate = {2024-08-15},
    journal = {Nature Genetics},
    author = {Sankararaman, Sriram and Obozinski, Guillaume and Jordan, Michael I. and Halperin, Eran},
    year = {2009},
    note = {Publisher: Nature Publishing Group},
    pages = {965--967},
}

@article{homer_resolving_2008,
    title = {Resolving {Individuals} {Contributing} {Trace} {Amounts} of {DNA} to {Highly} {Complex} {Mixtures} {Using} {High}-{Density} {SNP} {Genotyping} {Microarrays}},
    volume = {4},
    issn = {1553-7404},
    url = {https://journals.plos.org/plosgenetics/article?id=10.1371/journal.pgen.1000167},
    doi = {10.1371/journal.pgen.1000167},
    language = {en},
    number = {8},
    urldate = {2024-04-23},
    journal = {PLOS Genetics},
    author = {Homer, Nils and Szelinger, Szabolcs and Redman, Margot and Duggan, David and Tembe, Waibhav and Muehling, Jill and Pearson, John V. and Stephan, Dietrich A. and Nelson, Stanley F. and Craig, David W.},
    year = {2008},
    note = {Publisher: Public Library of Science},
    pages = {e1000167},
}

@inproceedings{bayesnetMI2021,
  title={Quantifying the privacy risks of learning high-dimensional graphical models},
  author={Murakonda, Sasi Kumar and Shokri, Reza and Theodorakopoulos, George},
  booktitle={International Conference on Artificial Intelligence and Statistics},
  pages={2287--2295},
  year={2021},
  organization={PMLR}
}

@inproceedings{ye2022enhanced,
  title={Enhanced Membership Inference Attacks against Machine Learning Models},
  author={Ye, Jiayuan and Maddi, Aadyaa and Murakonda, Sasi Kumar and Bindschaedler, Vincent and Shokri, Reza},
  booktitle={Proceedings of the 2022 ACM SIGSAC Conference on Computer and Communications Security},
  pages={3093--3106},
  year={2022}
}

@inproceedings{zarifzadeh2024low,
  title={Low-Cost High-Power Membership Inference Attacks},
  author={Zarifzadeh, Sajjad and Liu, Philippe and Shokri, Reza},
  booktitle={Forty-first International Conference on Machine Learning (ICML)},
  year={2024}
}

@inproceedings{tong2025much,
  title={How much of my dataset did you use? Quantitative Data Usage Inference in Machine Learning},
  author={Tong, Yao and Ye, Jiayuan and Zarifzadeh, Sajjad and Shokri, Reza},
  booktitle={The Thirteenth International Conference on Learning Representations (ICLR)},
  year={2025}
}

@inproceedings{tao2025range,
  title={Range Membership Inference Attacks},
  author={Tao, Jiashu and Shokri, Reza},
  booktitle={IEEE Conference on Secure and Trustworthy Machine Learning (SaTML)},
  year={2025}
}

@inproceedings{nasr2018comprehensive,
  title={Comprehensive privacy analysis of deep learning},
  author={Nasr, Milad and Shokri, Reza and Houmansadr, Amir},
  booktitle={Proceedings of the 2019 IEEE Symposium on Security and Privacy (SP)},
  pages={1--15},
  year={2018}
}

@inproceedings{shokri2017membership,
  title={Membership inference attacks against machine learning models},
  author={Shokri, Reza and Stronati, Marco and Song, Congzheng and Shmatikov, Vitaly},
  booktitle={2017 IEEE symposium on security and privacy (SP)},
  pages={3--18},
  year={2017},
  organization={IEEE}
}

@inproceedings{kumar2020mlprivacy,
  author    = {Sasi Kumar and Shokri, Reza},
  title     = {ML Privacy Meter: Aiding regulatory compliance by quantifying the privacy risks of machine learning},
  year      = {2020},
  maintitle = {The 20th Privacy Enhancing Technologies Symposium},
  booktitle = {Workshop on Hot Topics in Privacy Enhancing Technologies (HotPETs)},
}

@book{bn_cancer,
  title={Bayesian artificial intelligence},
  author={Korb, Kevin B and Nicholson, Ann E},
  year={2010},
  publisher={CRC press}
}

@Article{bnlearn,
    title = {Learning Bayesian Networks with the {bnlearn} {R}
      Package},
    author = {Marco Scutari},
    journal = {Journal of Statistical Software},
    year = {2010},
    volume = {35},
    number = {3},
    pages = {1--22},
    doi = {10.18637/jss.v035.i03},
}

@article{bn_asia,
  title={Local computations with probabilities on graphical structures and their application to expert systems},
  author={Lauritzen, Steffen L and Spiegelhalter, David J},
  journal={Journal of the Royal Statistical Society: Series B (Methodological)},
  volume={50},
  number={2},
  pages={157--194},
  year={1988},
  publisher={Wiley Online Library}
}

@book{bn_earthquake,
  title={Bayesian artificial intelligence},
  author={Korb, Kevin B and Nicholson, Ann E},
  year={2010},
  publisher={CRC press}
}

@article{bn_sachs,
  title={Causal protein-signaling networks derived from multiparameter single-cell data},
  author={Sachs, Karen and Perez, Omar and Pe'er, Dana and Lauffenburger, Douglas A and Nolan, Garry P},
  journal={Science},
  volume={308},
  number={5721},
  pages={523--529},
  year={2005},
  publisher={American Association for the Advancement of Science}
}

@book{bn_survey,
  title={Bayesian networks: with examples in R},
  author={Scutari, Marco and Denis, Jean-Baptiste},
  year={2021},
  publisher={Chapman and Hall/CRC}
}

@book{murphy_bayesian_decision,
  title={Probabilistic machine learning: an introduction},
  author={Murphy, Kevin P},
  year={2022},
  publisher={MIT press}
}

@book{adnan_param_learning,
  title={Modeling and reasoning with Bayesian networks},
  author={Darwiche, Adnan},
  year={2009},
  publisher={Cambridge university press}
}

@article{kitson2023structurelearning,
  title={A survey of Bayesian Network structure learning},
  author={Kitson, Neville Kenneth and Constantinou, Anthony C and Guo, Zhigao and Liu, Yang and Chobtham, Kiattikun},
  journal={Artificial Intelligence Review},
  volume={56},
  number={8},
  pages={8721--8814},
  year={2023},
  publisher={Springer}
}

@article{sunnaaker2013approximate,
  title={Approximate bayesian computation},
  author={Sunn{\aa}ker, Mikael and Busetto, Alberto Giovanni and Numminen, Elina and Corander, Jukka and Foll, Matthieu and Dessimoz, Christophe},
  journal={PLoS computational biology},
  volume={9},
  number={1},
  pages={e1002803},
  year={2013},
  publisher={Public Library of Science San Francisco, USA}
}

@article{pgmpy,
  author  = {Ankur Ankan and Johannes Textor},
  title   = {pgmpy: A Python Toolkit for Bayesian Networks},
  journal = {Journal of Machine Learning Research},
  year    = {2024},
  volume  = {25},
  number  = {265},
  pages   = {1--8},
  url     = {http://jmlr.org/papers/v25/23-0487.html}
}

@article{moy2025roulette,
  title={Roulette: A Language for Expressive, Exact, and Efficient Discrete Probabilistic Programming (with Appendices)},
  author={MOY, CAMERON and CZENSZAK, JACK and LI, JOHN M and MARSHALL, BRIANNA and HOLTZEN, STEVEN},
  journal={Proceedings of the ACM on Programming Languages},
  volume={9},
  number={PLDI},
  year={2025},
  publisher={ACM New York, NY, USA}
}

@article{holtzen2020scaling,
  title={Scaling exact inference for discrete probabilistic programs},
  author={Holtzen, Steven and Van den Broeck, Guy and Millstein, Todd},
  journal={Proceedings of the ACM on Programming Languages},
  volume={4},
  number={OOPSLA},
  pages={1--31},
  year={2020},
  publisher={ACM New York, NY, USA}
}

@article{stites2025multi,
  title={Multi-Language Probabilistic Programming},
  author={Stites, Sam and Li, John M and Holtzen, Steven},
  journal={Proceedings of the ACM on Programming Languages},
  volume={9},
  number={OOPSLA1},
  pages={1239--1266},
  year={2025},
  publisher={ACM New York, NY, USA}
}

@inproceedings{oakley2024synthesizing,
  title={Synthesizing Tight Privacy and Accuracy Bounds via Weighted Model Counting},
  author={Oakley, Lisa and Holtzen, Steven and Oprea, Alina},
  booktitle={2024 IEEE 37th Computer Security Foundations Symposium (CSF)},
  pages={449--463},
  year={2024},
  organization={IEEE}
}

@inproceedings{pardo2021privug,
  title={Privug: using probabilistic programming for quantifying leakage in privacy risk analysis},
  author={Pardo, Ra{\'u}l and Rafnsson, Willard and Probst, Christian W and Wasowski, Andrzej},
  booktitle={European Symposium on Research in Computer Security},
  pages={417--438},
  year={2021},
  organization={Springer}
}

@inproceedings{tolpin2021probabilistic,
  title={Probabilistic programs with stochastic conditioning},
  author={Tolpin, David and Zhou, Yuan and Rainforth, Tom and Yang, Hongseok},
  booktitle={International Conference on Machine Learning},
  pages={10312--10323},
  year={2021},
  organization={PMLR}
}

@inproceedings{reed2010distance,
  title={Distance makes the types grow stronger: a calculus for differential privacy},
  author={Reed, Jason and Pierce, Benjamin C},
  booktitle={Proceedings of the 15th ACM SIGPLAN international conference on Functional programming},
  pages={157--168},
  year={2010}
}

@inproceedings{gaboardi2013linear,
  title={Linear dependent types for differential privacy},
  author={Gaboardi, Marco and Haeberlen, Andreas and Hsu, Justin and Narayan, Arjun and Pierce, Benjamin C},
  booktitle={Proceedings of the 40th annual ACM SIGPLAN-SIGACT symposium on Principles of programming languages},
  pages={357--370},
  year={2013}
}

@article{near2019duet,
  title={Duet: an expressive higher-order language and linear type system for statically enforcing differential privacy},
  author={Near, Joseph P and Darais, David and Abuah, Chike and Stevens, Tim and Gaddamadugu, Pranav and Wang, Lun and Somani, Neel and Zhang, Mu and Sharma, Nikhil and Shan, Alex and others},
  journal={Proceedings of the ACM on Programming Languages},
  volume={3},
  number={OOPSLA},
  pages={1--30},
  year={2019},
  publisher={ACM New York, NY, USA}
}

@article{barthe2016programming,
  title={Programming language techniques for differential privacy},
  author={Barthe, Gilles and Gaboardi, Marco and Hsu, Justin and Pierce, Benjamin},
  journal={ACM SIGLOG News},
  volume={3},
  number={1},
  pages={34--53},
  year={2016},
  publisher={ACM New York, NY, USA}
}

@inproceedings{AuditingDP,
 author = {Jagielski, Matthew and Ullman, Jonathan and Oprea, Alina},
 booktitle = {Proceedings of Advances in Neural Information Processing Systems},
 series = {NeurIPS},
 pages = {22205--22216},
 title = {Auditing Differentially Private Machine Learning: How Private is Private {SGD?}},
 url = {https://proceedings.neurips.cc/paper/2020/file/fc4ddc15f9f4b4b06ef7844d6bb53abf-Paper.pdf},
 volume = {33},
 year = {2020}
}

@inproceedings{nasr2021adversary,
  author       = {Milad Nasr and
                  Shuang Song and
                  Abhradeep Thakurta and
                  Nicolas Papernot and
                  Nicholas Carlini},
  title        = {Adversary Instantiation: Lower Bounds for Differentially Private Machine
                  Learning},
  booktitle    = {42nd {IEEE} Symposium on Security and Privacy, {SP} 2021, San Francisco,
                  CA, USA, 24-27 May 2021},
  pages        = {866--882},
  publisher    = {{IEEE}},
  year         = {2021},
  url          = {https://doi.org/10.1109/SP40001.2021.00069},
  doi          = {10.1109/SP40001.2021.00069},
  bibsource    = {dblp computer science bibliography, https://dblp.org}
}

@article{andrew2023oneshot,
      title={One-shot Empirical Privacy Estimation for Federated Learning}, 
      author={Galen Andrew and Peter Kairouz and Sewoong Oh and Alina Oprea and H. Brendan McMahan and Vinith Suriyakumar},
      year={2023},
      journal   = {CoRR},
      volume    = {abs/2302.03098},
      archivePrefix={arXiv},
      primaryClass={cs.LG}
}

@inproceedings{nasr2023tight,
author = {Nasr, Milad and Hayes, Jamie and Steinke, Thomas and Balle, Borja and Tram\`{e}r, Florian and Jagielski, Matthew and Carlini, Nicholas and Terzis, Andreas},
title = {Tight auditing of differentially private machine learning},
year = {2023},
isbn = {978-1-939133-37-3},
publisher = {USENIX Association},
address = {USA},
booktitle = {Proceedings of the 32nd USENIX Conference on Security Symposium},
articleno = {92},
numpages = {18},
location = {Anaheim, CA, USA},
series = {SEC '23}
}

@inproceedings{pillutla2023unleashing,
title={Unleashing the Power of Randomization in Auditing Differentially Private {ML}},
author={Krishna Pillutla and Galen Andrew and Peter Kairouz and Hugh Brendan McMahan and Alina Oprea and Sewoong Oh},
booktitle={Thirty-seventh Conference on Neural Information Processing Systems},
year={2023},
url={https://openreview.net/forum?id=mlbes5TAAg}
}

@inproceedings{steinke2023auditing,
title={Privacy Auditing with One (1) Training Run},
author={Thomas Steinke, Milad Nasr, Matthew Jagielski},
booktitle={Thirty-seventh Conference on Neural Information Processing Systems},
year={2023},
url={https://openreview.net/forum?id=mlbes5TAAg}
}

@article{Columbia2025,
   author = {Weissman, Sara},
   title = {Hack at Columbia Hits 870K People},
   year = {2025},
   journal = {Inside Higher Education},
   note = {Available at: \url{https://www.insidehighered.com/news/tech-innovation/administrative-tech/2025/08/12/hack-columbia-university-hits-870k-people} (Accessed: {January 13th, 2025})},
   month = {12 Aug},
   type = {Newspaper Article}
}

@article{dwork2014textbook,
  title={The algorithmic foundations of differential privacy},
  author={Dwork, Cynthia and Roth, Aaron},
  journal={Foundations and trends{\textregistered} in theoretical computer science},
  volume={9},
  number={3-4},
  pages={211--487},
  year={2014},
  publisher={Emerald Publishing Limited}
}

@inproceedings{dwork2006dp,
  title={Calibrating noise to sensitivity in private data analysis},
  author={Dwork, Cynthia and McSherry, Frank and Nissim, Kobbi and Smith, Adam},
  booktitle={Theory of cryptography conference},
  pages={265--284},
  year={2006},
  organization={Springer}
}

@Misc{stan,
author =   {Stan Development Team},
title =    {{Stan Reference Manual}},
howpublished = {\url{https://mc-stan.org}},
year = {2026}
}

@article{bingham2019pyro,
  author    = {Eli Bingham and
               Jonathan P. Chen and
               Martin Jankowiak and
               Fritz Obermeyer and
               Neeraj Pradhan and
               Theofanis Karaletsos and
               Rohit Singh and
               Paul A. Szerlip and
               Paul Horsfall and
               Noah D. Goodman},
  title     = {Pyro: Deep Universal Probabilistic Programming},
  journal   = {J. Mach. Learn. Res.},
  volume    = {20},
  pages     = {28:1--28:6},
  year      = {2019},
  url       = {http://jmlr.org/papers/v20/18-403.html}
}

@inproceedings{machado2025sequential,
  title={Sequential conditional transport on probabilistic graphs for interpretable counterfactual fairness},
  author={Machado, Agathe Fernandes and Charpentier, Arthur and Gallic, Ewen},
  booktitle={Proceedings of the AAAI Conference on Artificial Intelligence},
  volume={39},
  number={18},
  pages={19358--19366},
  year={2025}
}

\appendix
\section{Generative AI Usage Statement}
The authors did not use LLMs in any way in the research or writing of this paper.
\section{Lemmas for Main Theorems}
\label{apdx:lemmas}
Here we present two main lemmas to help prove correctness of the Bayesian approach on selected data distributions.

\begin{lemma}[Ratio Equality]\label{lem:ratioeq}
    Given a population that follows a product distribution, the ratio $\bayesratio$ in eqn. (\ref{eqn:bayestest}) is equivalent to the ratio $\freqratio$ in eqn. (\ref{eqn:freq_test}).
\end{lemma}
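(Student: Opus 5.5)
The plan is to read the program in Algorithm~\ref{alg:posterior} as an explicit Bayesian model and compute both branches in closed form. The prior on $\tlabelparam$ is $\flip{0.5}$, and the $\observe{\cdot}$ statements condition on the event $E$ that the column sums match $\numrecords\datamarg$. Bayes' rule then gives
\[
\bayesratio=\frac{\pr{\tlabelparam\mid E}}{1-\pr{\tlabelparam\mid E}}=\frac{\tfrac12\Pr(E\mid \tlabelparam)}{\tfrac12\Pr(E\mid\neg\tlabelparam)}=\frac{\Pr\bigl(\target+\sum_{i=1}^{\numrecords-1}\datarecord^{(i)}=\numrecords\datamarg\bigr)}{\Pr\bigl(\sum_{i=1}^{\numrecords}\datarecord^{(i)}=\numrecords\datamarg\bigr)} .
\]
The prior factors cancel, so $\bayesratio$ is just a ratio of two likelihoods of the observed sums.

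Next I would use the product structure of Fig.~\ref{fig:bn_product}. Each $\datarecord^{(i)}$ has independent coordinates with $\Pr(\datarecord^{(i)}_j=1)=\popexp_j$, and the records are i.i.d. Hence the $\dimension$ coordinate sums are mutually independent, with $\sum_{i\le m}\datarecord^{(i)}_j\sim\bindist{m}{\popexp_j}$. Both numerator and denominator therefore factor over $j$. Writing $k_j=\numrecords\datamarg_j$, this gives
\[
\bayesratio=\prod_{j=1}^{\dimension}\frac{\Pr\bigl(\bindist{\numrecords-1}{\popexp_j}=k_j-\target_j\bigr)}{\Pr\bigl(\bindist{\numrecords}{\popexp_j}=k_j\bigr)} .
\]

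The remaining step is a per-coordinate binomial computation, split into the two cases of $\target_j$.

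\textbf{Case $\target_j=1$.} The factor is
\[
\frac{\binom{\numrecords-1}{k_j-1}\popexp_j^{k_j-1}(1-\popexp_j)^{\numrecords-k_j}}{\binom{\numrecords}{k_j}\popexp_j^{k_j}(1-\popexp_j)^{\numrecords-k_j}}=\frac{k_j/\numrecords}{\popexp_j}=\frac{\datamarg_j}{\popexp_j}.
\]
This uses the identity $\binom{\numrecords-1}{k-1}/\binom{\numrecords}{k}=k/\numrecords$.

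\textbf{Case $\target_j=0$.} Using $\binom{\numrecords-1}{k}/\binom{\numrecords}{k}=(\numrecords-k)/\numrecords$, the factor becomes $(1-\datamarg_j)/(1-\popexp_j)$.

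Multiplying over $j$ yields exactly eqn.~(\ref{eqn:freq_test}), so $\bayesratio=\freqratio$.

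The main obstacle is not the algebra but pinning down the semantics and the edge cases. First, I must make sure that "$1-\pr{\tlabelparam\mid\cdot}$" really corresponds to the else-branch likelihood. This holds because the program's posterior is normalized over exactly the two branches. Second, I would assume $0<\popexp_j<1$ so that the denominator is positive. Third, the boundary cases need checking. When $k_j=0$ and $\target_j=1$, or $k_j=\numrecords$ and $\target_j=0$, the numerator binomial probability is $0$. The corresponding factor of $\freqratio$ is then also $0$, so equality persists, with $\binom{\cdot}{-1}:=0$. I would also remark that the denominator branch implicitly treats the observation as not depending on $\target$, matching the LRT's use of the population marginal in its denominator.
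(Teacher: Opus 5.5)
Your proposal is correct and follows the same route as the paper's proof. In both, the uniform prior cancels so that $\bayesratio$ becomes a ratio of likelihoods of the observed column sums, this ratio factors over the independent coordinates, and each coordinate's binomial ratio evaluates to $\datamarg_j/\popexp_j$ or $(1-\datamarg_j)/(1-\popexp_j)$ according to $\target_j$; your explicit treatment of the boundary cases and the assumption $0<\popexp_j<1$ adds care that the paper leaves implicit.
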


\begin{proof}
    Let $\popbn$ be the Bayesian network of a product population with dimension $\dimension$, as described in Fig. \ref{fig:bn_product} with probability vector $\popexp$ such that $\popexp_j$ for $j\in\{1,\dots,\dimension\}$. Let private dataset $\dataset=\{\mathbf{x}^{(1)},\dots,\mathbf{x}^{(\numrecords)}\}$ consist of $\numrecords$ samples (records) drawn i.i.d. from $\population$ with sample mean $\datamarg=\frac{1}{\numrecords}\sum_{i=1}^{\numrecords} \mathbf{x}^{(i)}$. Let target $\target$ be some target record drawn from $\unifofset{\dataset}$ or from $\population$. For clarity, we set $A\gets (\popbn,\,\datamarg,\,\target)$.

    By the definition of conditional probability,
    \begin{align*}
        \bayesratio =\frac
    {\pr{\tlabelparam \mid A}}
    {\pr{\lnot\tlabelparam \mid A}}=\frac
    {\frac{\pr{\tlabelparam, A}}{\pr{A}}}
    {\frac{\pr{\lnot\tlabelparam,A}}{\pr{A}}}=\frac
    {\pr{\tlabelparam, A}}
    {\pr{\lnot\tlabelparam, A}}.
    \end{align*}

    By the definition of Algorithm \ref{alg:posterior}, the definition of the Bayesian network, and arithmetic, we have that 
    \begin{align*}
        \frac
    {\pr{\tlabelparam, A}}
    {\pr{\lnot\tlabelparam, A}}
    &=\frac
    {\pr{\target + \sum_{i=1}^{n-1}\popexp=\datamarg\cdot\numrecords}}
    {\pr{\sum_{i=1}^{n}\popexp=\datamarg\cdot\numrecords}}\\
    &=\frac
    {\pr{\sum_{i=1}^{n-1}\popexp=\datamarg\cdot\numrecords-\target}}
    {\pr{\sum_{i=1}^{n}\popexp=\datamarg\cdot\numrecords}}
    \end{align*}

    Because the coordinates are independent, we have
    \begin{align*}
        &\frac
    {\pr{\sum_{i=1}^{n-1}\popexp=\datamarg\cdot\numrecords-\target}}
    {\pr{\sum_{i=1}^{n}\popexp=\datamarg\cdot\numrecords}}\\
    &=\frac
    {\prod_{j=1}^{\dimension}{\pr{\sum_{i=1}^{n-1}\popexp_j=\datamarg_j\cdot\numrecords-\target_j}}}
    {\prod_{j=1}^{\dimension}\pr{\sum_{i=1}^{n}\popexp_j=\datamarg\cdot\numrecords}}\\
    &=\prod_{j=1}^{\dimension}
    \frac{\pr{\sum_{i=1}^{n-1}\popexp_j=\datamarg_j\cdot\numrecords-\target_j}}
    {\pr{\sum_{i=1}^{n}\popexp_j=\datamarg\cdot\numrecords}}
    \end{align*}
    
    Let's look at a single coordinate, $j\in\{1,\dots,\dimension\}$ and let $k=\datamarg_j\cdot \numrecords$ and $q=1-\popexp_j$ for notational simplicity. Because we have binomial distribution ($\bindist{\cdot}{\cdot}$) on the top and bottom, therefore,  we have that 
    \begin{align*}
        \frac
    {\pr{\sum_{i=1}^{n-1}\popexp_j=k-\target_j}}
    {\pr{\sum_{i=1}^{n}\popexp_j=k}}
    &=\frac
    {\bindist{\numrecords-1}{\popexp_j}(k-\target_j)}
    {\bindist{\numrecords}{\popexp_j}(k)}\\
    &=\frac
    {{\numrecords-1 \choose k-\target_j} \popexp_j^{k-\target_j}(q^{\numrecords-1-(k-\target_j)})}
    {{\numrecords \choose k} \popexp_j^{k}(q^{n-k})}
    \end{align*}

    If $\target_j=0$, we have
    \begin{align*}
        \frac
    {{\numrecords-1 \choose k-\target_j} \popexp_j^{k-\target_j}(q^{\numrecords-1-(k-\target_j)})}
    {{\numrecords \choose k} \popexp_j^{k}(q^{n-k})}
        &=\frac
    {{\numrecords-1 \choose k} \popexp_j^{k}(q^{\numrecords-1-(k)})}
    {{\numrecords \choose k} \popexp_j^{k}(q^{n-k})}\\
        &=\frac
    {{\numrecords-k}}
    {{\numrecords}q}
        =\frac{\numrecords-(\numrecords\datamarg_j)}{\numrecords(1-\popexp_j)}\\
        &=\frac{\numrecords(1-\datamarg_j)}{\numrecords(1-\popexp_j)}
        =\frac{1-\datamarg_j}{1-\popexp_j}
    \end{align*}

    If $\target_j=1$, we have
    \begin{align*}
        \frac
    {{\numrecords-1 \choose k-\target_j} \popexp_j^{k-\target_j}(q^{\numrecords-1-(k-\target_j)})}
    {{\numrecords \choose k} \popexp_j^{k}(q^{n-k})}
        &=\frac
    {{\numrecords-1 \choose k-1} \popexp_j^{k-1}(q^{\numrecords-1-(k-1)})}
    {{\numrecords \choose k} \popexp_j^{k}(q^{n-k})}\\
        &=\frac{k}{\numrecords\popexp_j}=\frac{\numrecords\datamarg_j}{\numrecords\popexp_j}=\frac{\datamarg_j}{\popexp_j}
    \end{align*}

    Therefore, we have 

    \begin{align*}
        \bayesratio
        &=\prod_{j=1}^{\dimension}
        \frac{\pr{\sum_{i=1}^{n-1}\popexp_j=\datamarg_j\cdot\numrecords-\target_j}}
        {\pr{\sum_{i=1}^{n}\popexp_j=\datamarg_j\cdot\numrecords}}\\
        &=\prod_{j=1}^{\dimension}\left\{\text{if } \target_j=1 \text{ then }\frac{\datamarg_j}{\popexp_j} \text{ else }\frac{1-\datamarg_j}{1-\popexp_j}\right\}\\
        &=\frac
        {\prod_{j=1}^{\dimension}\left\{\text{if } \target_j=1 \text{ then }\datamarg_j \text{ else }1-\datamarg_j\right\}}
        {\prod_{j=1}^{\dimension}\left\{\text{if } \target_j=1 \text{ then }\popexp_j \text{ else }1-\popexp_j\right\}}\\
        &=\Lambda,
    \end{align*}
    as desired.
\end{proof}

\begin{lemma}[Partial Ratio Equality]\label{lem:partialratioeq}
    Given a population that follows a left/right repeated distribution as described in Fig. \ref{fig:bn_side} with parameter $\midpt$, the ratio $\bayesratio$ in eqn. (\ref{eqn:bayestest}) is equivalent to the ratio $\freqratio_{\midpt,\side}$ in eqn. (\ref{eqn:freq_test_side_left}) or (\ref{eqn:freq_test_side_right}) for fixed $\side\in\{left,\,right\}$.
\end{lemma}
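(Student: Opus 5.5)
The plan mirrors the proof of Lemma \ref{lem:ratioeq}, with the repeated block of attributes collapsed to a single coordinate. Fix $\side$ and write $A\gets(\popbn,\datamarg,\target)$. As before, $\bayesratio=\pr{\tlabelparam,A}/\pr{\lnot\tlabelparam,A}$. By Algorithm \ref{alg:posterior} this equals
\[
\frac{\pr{\target+\sum_{i=1}^{n-1}\datarecord^{(i)}=\datamarg\cdot\numrecords}}{\pr{\sum_{i=1}^{n}\datarecord^{(i)}=\datamarg\cdot\numrecords}},
\]
where now each $\datarecord^{(i)}$ is an instance of the left/right repeated network rather than a product network.

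\textbf{Step 1: reduce to the independent coordinates.} For fixed $\side$, every record has independent Bernoulli coordinates on the independent side, including the shared coordinate at $\midpt$ (or $\midpt-1$). On the repeated side, every coordinate copies that shared coordinate. So the event that the vector sum equals $\datamarg\cdot\numrecords$ splits into two parts.
\begin{itemize}
\item[(a)] The sum matches on the independent coordinates.
\item[(b)] The repeated coordinates of $\datamarg\cdot\numrecords$ all equal the shared coordinate's count, and the repeated coordinates of $\target$ all equal its shared coordinate.
\end{itemize}
Condition (b) is deterministic given (a). In the $\tin$ case, $\target$ is consistent (it is either a dataset record or a population draw, so it satisfies the repetition), and $\datamarg$ is consistent because it was released from data generated by this population. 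Hence the indicator for (b) equals $1$ in both the numerator and the denominator, and it cancels. Only the probabilities over the coordinates indexed $1,\dots,\midpt$ (for $right$) or $\midpt-1,\dots,\dimension$ (for $left$) remain.

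\textbf{Step 2: factor and apply the binomial computation.} On the surviving index set the coordinates are independent, so the ratio factors into a product over $j$ in that set. Each factor is $\bindist{\numrecords-1}{\popexp_j}(k-\target_j)/\bindist{\numrecords}{\popexp_j}(k)$ with $k=\numrecords\datamarg_j$. The single-coordinate calculation in the proof of Lemma \ref{lem:ratioeq} reduces each factor to $\datamarg_j/\popexp_j$ when $\target_j=1$ and to $(1-\datamarg_j)/(1-\popexp_j)$ when $\target_j=0$. Multiplying over the index set gives exactly $\freqratio_{\midpt,right}$ in eqn. (\ref{eqn:freq_test_side_right}) or $\freqratio_{\midpt,left}$ in eqn. (\ref{eqn:freq_test_side_left}).

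\textbf{Main obstacle.} The delicate step is Step 1. I must check that the deterministic repetition constraint really contributes the same factor to the numerator and the denominator. I must also pin down the treatment of $\side$: the lemma holds it fixed, so I am conditioning the network on that side. If $\side$ were instead a random node inside $\networkprog{\popbn}$, I would argue that $\datamarg$ with $\numrecords>1$ generically reveals the side. Concretely, the side is identified by which half of $\datamarg$ consists of equal entries. The other side's branch then has probability zero in both numerator and denominator, which reduces to the fixed-side case. The remaining work is routine bookkeeping of the boundary index at $\midpt$ versus $\midpt-1$, so that the surviving index set matches the one in the clipped ratio.
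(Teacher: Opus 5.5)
Your proposal is correct and follows the paper's own argument. You fix the side, use the repetition to collapse the repeated block onto the shared coordinate, factor over the independent coordinates, and reuse the single-coordinate binomial computation from Lemma~\ref{lem:ratioeq}; the paper states the collapse using only $\datamarg_j=\datamarg_{\midpt}$ and $\target_j=\target_{\midpt}$, whereas you spell out why the repeated-coordinate constraints contribute the same factor to numerator and denominator. The random-side contingency in your last paragraph is not needed for the lemma as stated. It would also fail if the side were drawn inside each copy of $\networkprog{\popbn}$, since records would then have different sides and $\datamarg$ would not reveal a single one; it only works, generically, for one side flip shared by all records.
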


\begin{proof}
    Let $\popbn$ be the Bayesian network of a left/right repeated population with dimension $\dimension$ and midpoint $\midpt$, as described in Fig. \ref{fig:bn_side}. Let private dataset $\dataset=\{\mathbf{x}^{(1)},\dots,\mathbf{x}^{(\numrecords)}\}$ consist of $\numrecords$ samples (records) drawn i.i.d. from $\population$ with sample mean $\datamarg=\frac{1}{\numrecords}\sum_{i=1}^{\numrecords} \mathbf{x}^{(i)}$. Let target $\target$ be some target record drawn from $\unifofset{\dataset}$ or from $\population$. For clarity, we set $A\gets (\popbn,\,\datamarg,\,\target)$.

    By the definition of conditional probability,
\begin{align*}
        \bayesratio =\frac
    {\pr{\tlabelparam \mid A}}
    {\pr{\lnot\tlabelparam \mid A}}=\frac
    {\frac{\pr{\tlabelparam, A}}{\pr{A}}}
    {\frac{\pr{\lnot\tlabelparam,A}}{\pr{A}}}=\frac
    {\pr{\tlabelparam, A}}
    {\pr{\lnot\tlabelparam, A}}.
    \end{align*}

    If $\side=right$, we know that attributes $1$ to $\midpt$ are independent, and for $j\in\{\midpt+1,\dots,\dimension\}$, $\popexp_j=\popexp_{\midpt}$, $\datamarg_j=\datamarg_{\midpt}$, and $\target_j=\target_{\midpt}$. Therefore we can split this quantity into two parts:
    \begin{align*}
        \frac
    {\pr{\tlabelparam, A}}
    {\pr{\lnot\tlabelparam, A}}=&\\
    &\prod_{j=1}^{\midpt-1}
    \left(\frac{\pr{\sum_{i=1}^{n-1}\popexp_j=\datamarg_j\cdot\numrecords-\target_j}}{\pr{\sum_{i=1}^{n}\popexp_j=\datamarg_j\cdot \numrecords}}\right)\\
    &\cdot
    \frac
    {\pr{\sum_{i=1}^{n-1}\mathbf{p}_{\midpt}=\datamarg_{\midpt}\cdot\numrecords-\target_{\midpt}}}{\pr{\sum_{i=1}^{n}\mathbf{p}_{\midpt}=\datamarg_{\midpt}\cdot\numrecords}}.
    \end{align*}
    
    By applying Lemma \ref{lem:ratioeq} twice, once with the product network of dimension $\dimension=\midpt$ and once with the product network of dimension $\dimension=1$, this is equivalent to 
    \begin{align*}
     &\prod_{j=1}^{\midpt-1}
     \frac
        {\left\{\text{if } \target_j=1 \text{ then }\datamarg_j \text{ else }1-\datamarg_j\right\}}
        {\left\{\text{if } \target_j=1 \text{ then }\popexp_j \text{ else }1-\popexp_j\right\}}\\
        &\cdot\frac
        {\left\{\text{if } \target_{\midpt}=1 \text{ then }\datamarg_{\midpt} \text{ else }1-\datamarg_{\midpt}\right\}}
        {\left\{\text{if } \target_{\midpt}=1 \text{ then }\popexp_{\midpt} \text{ else }1-\popexp_{\midpt}\right\}}\\
    &=\frac
        {\prod_{j=1}^{\midpt}\left\{\text{if } \target_j=1 \text{ then }\datamarg_j \text{ else }1-\datamarg_j\right\}}
        {\prod_{j=1}^{\midpt}\left\{\text{if } \target_j=1 \text{ then }\popexp_j \text{ else }1-\popexp_j\right\}}\\
    &=\freqratio_{\midpt,\side},
    \end{align*}
    as desired.

    When $\side=left$, we have a symmetrical argument, where the left hand of the product represents the repeated elements for $j=1$ to $\midpt-1$, and the right hand side of the product represents the independent elements $\midpt$ to $d$. We similarly apply Lemma \ref{lem:ratioeq} to both sides of the product and get:
    \begin{align*}
        \bayesratio=&\frac
        {\left\{\text{if } \target_{\midpt}=1 \text{ then }\datamarg_{\midpt} \text{ else }1-\datamarg_{\midpt}\right\}}
        {\left\{\text{if } \target_{\midpt}=1 \text{ then }\popexp_{\midpt} \text{ else }1-\popexp_{\midpt}\right\}}\\
     &\cdot\prod_{j=\midpt-1}^{\dimension}
     \frac
        {\left\{\text{if } \target_j=1 \text{ then }\datamarg_j \text{ else }1-\datamarg_j\right\}}
        {\left\{\text{if } \target_j=1 \text{ then }\popexp_j \text{ else }1-\popexp_j\right\}}\\
    &=\frac
        {\prod_{j=\midpt}^{\dimension}\left\{\text{if } \target_j=1 \text{ then }\datamarg_j \text{ else }1-\datamarg_j\right\}}
        {\prod_{j=\midpt}^{\dimension}\left\{\text{if } \target_j=1 \text{ then }\popexp_j \text{ else }1-\popexp_j\right\}}\\
    &=\freqratio_{\midpt,\side},
    \end{align*}
    as desired.

    We have covered both cases, thus completing our proof.
    
\end{proof}
\section{Example Bayesian Network Construct}\label{apdx:network_ppl}
This is an example of an encoding of the Cancer Bayesian network from Fig. \ref{fig:bn_cancer} written in the Roulette probabilistic programming language.
\begin{Verbatim}
(define NETWORK
  '((variable Pollution (type discrete (2) (low high)))
    (variable Smoker (type discrete (2) (True False)))
    (variable Cancer (type discrete (2) (True False)))
    (variable Xray (type discrete (2) (positive negative)))
    (variable Dyspnoea (type discrete (2) (True False)))
    (probability (Cancer Pollution Smoker) 
    ((high False) 0.02 0.98)
    ((high True) 0.05 0.95)
    ((low False) 0.001 0.999)
    ((low True) 0.03 0.97))
(probability (Dyspnoea Cancer) 
    ((False) 0.3 0.7)
    ((True) 0.65 0.35))
(probability (Pollution) (table 0.9 0.1))
(probability (Smoker) (table 0.3 0.7))
(probability (Xray Cancer) 
    ((False) 0.2 0.8)
    ((True) 0.9 0.1))))
\end{Verbatim}
\section{Additional Experimental Results for Weak and Weakest Threat Models}\label{apdx:weak_weakest}
In Sec. \ref{sec:weaker} we present the results for running the attacks under the weak and weakest threat models. 
In Fig. \ref{fig:all_weak_weakest}, we demonstrate that the Bayesian attacker is similarly effective across three other benchmarking Bayesian networks.

\begin{figure*}
    \centering
    \begin{subfigure}{.4\linewidth}
        \centering
        \scalebox{0.9}{\begin{tabular}{llllll}
\toprule
BN (\dimension) & $m$ & Attack  & AUC & Std. \\
\midrule
cancer (10) & 10 & \textbf{Bayes} & 0.751000 & 0.104000 \\
cancer (10) & 10 & LRT & 0.710000 & 0.123000 \\
cancer (10) & 10 & IP & 0.695000 & 0.127000 \\
\midrule
cancer (10) & 50 & \textbf{Bayes} & 0.750000 & 0.107000 \\
cancer (10) & 50 & LRT & 0.708000 & 0.122000 \\
cancer (10) & 50 & IP & 0.693000 & 0.127000 \\
\midrule
cancer (10) & 100 & \textbf{Bayes} & 0.748000 & 0.109000 \\
cancer (10) & 100 & LRT & 0.707000 & 0.124000 \\
cancer (10) & 100 & IP & 0.693000 & 0.128000 \\
\midrule
earthquake (10) & 10 & \textbf{Bayes} & 0.579000 & 0.065000 \\
earthquake (10) & 10 & LRT & 0.539000 & 0.081000 \\
earthquake (10) & 10 & IP & 0.522000 & 0.082000 \\
\midrule
earthquake (10) & 50 & \textbf{Bayes} & 0.579000 & 0.067000 \\
earthquake (10) & 50 & LRT & 0.532000 & 0.086000 \\
earthquake (10) & 50 & IP & 0.517000 & 0.084000 \\
\midrule
earthquake (10) & 100 & \textbf{Bayes} & 0.573000 & 0.064000 \\
earthquake (10) & 100 & LRT & 0.536000 & 0.077000 \\
earthquake (10) & 100 & IP & 0.520000 & 0.076000 \\
\midrule
survey (14) & 10 & \textbf{Bayes} & 0.856000 & 0.086000 \\
survey (14) & 10 & LRT & 0.854000 & 0.086000 \\
survey (14) & 10 & IP & 0.829000 & 0.093000 \\
\midrule
survey (14) & 50 & \textbf{Bayes} & 0.855000 & 0.089000 \\
survey (14) & 50 & LRT & 0.852000 & 0.089000 \\
survey (14) & 50 & IP & 0.824000 & 0.096000 \\
\midrule
survey (14) & 100 & \textbf{Bayes} & 0.855000 & 0.087000 \\
survey (14) & 100 & LRT & 0.853000 & 0.088000 \\
survey (14) & 100 & IP & 0.826000 & 0.096000 \\
\bottomrule
\end{tabular}}
        \caption{Public Auxiliary Dataset and Structure (Weak Attacker)}
        \label{fig:all_weak}
    \end{subfigure}
    \qquad
    \hspace{2em}
    \begin{subfigure}{.4\linewidth}
        \centering
\scalebox{0.9}{\begin{tabular}{lllll}
\toprule
BN (\dimension) & $m$ & Attack  & AUC & Std. \\
\midrule
cancer (10) & 10 &  \textbf{Bayes} & 0.751000 & 0.110000 \\
cancer (10) & 10 &  LRT & 0.705000 & 0.129000 \\
cancer (10) & 10 &  IP & 0.688000 & 0.135000 \\
\midrule
cancer (10) & 50 & \textbf{Bayes} & 0.760000 & 0.106000 \\
cancer (10) & 50 & LRT & 0.719000 & 0.126000 \\
cancer (10) & 50 & IP & 0.704000 & 0.130000 \\
\midrule
cancer (10) & 100 & \textbf{Bayes} & 0.748000 & 0.108000 \\
cancer (10) & 100 & LRT & 0.700000 & 0.128000 \\
cancer (10) & 100 & IP & 0.684000 & 0.134000 \\
\midrule
earthquake (10) & 10 &  \textbf{Bayes} & 0.573000 & 0.065000 \\
earthquake (10) & 10 &  LRT & 0.538000 & 0.078000 \\
earthquake (10) & 10 &  IP & 0.527000 & 0.078000 \\
\midrule
earthquake (10) & 50 & \textbf{Bayes} & 0.577000 & 0.069000 \\
earthquake (10) & 50 & LRT & 0.535000 & 0.084000 \\
earthquake (10) & 50 & IP & 0.521000 & 0.084000 \\
\midrule
earthquake (10) & 100 & \textbf{Bayes} & 0.574000 & 0.066000 \\
earthquake (10) & 100 & LRT & 0.534000 & 0.081000 \\
earthquake (10) & 100 & IP & 0.523000 & 0.080000 \\
\midrule
survey (14) & 10 &  \textbf{Bayes} & 0.856000 & 0.088000 \\
survey (14) & 10 &  LRT & 0.850000 & 0.090000 \\
survey (14) & 10 &  IP & 0.823000 & 0.099000 \\
\midrule
survey (14) & 50 & \textbf{Bayes} & 0.860000 & 0.088000 \\
survey (14) & 50 & LRT & 0.858000 & 0.088000 \\
survey (14) & 50 & IP & 0.834000 & 0.095000 \\
\midrule
survey (14) & 100 & \textbf{Bayes} & 0.853000 & 0.091000 \\
survey (14) & 100 & LRT & 0.848000 & 0.093000 \\
survey (14) & 100 & IP & 0.824000 & 0.097000 \\
\bottomrule
\end{tabular}}
        \caption{Auxiliary Dataset and No Structure (Weakest Attacker)}
        \label{fig:all_weakest}
    \end{subfigure}
    \caption{Comparing our Bayesian (Bayes) attack AUC with the likelihood ratio test (LRT) and inner product (IP) attacks under the weak and weakest threat models on the three benchmarking Bayesian networks.}
    \label{fig:all_weak_weakest}
\end{figure*}

\end{document}